\let\origvec\vec
\let\spvec\vec
\let\vec\origvec
\let\vec\spvec
\newtheorem{theorem}{Theorem}
\newtheorem{lemma}[theorem]{Lemma}
\newtheorem{proposition}[theorem]{Proposition}
\newtheorem{definition}{Definition}
\newtheorem{remark}{Remark}
\newenvironment{proof}[1][Proof]{\begin{trivlist}
\item[\hskip \labelsep {\bfseries #1}]}{\end{trivlist}}
\newcommand{\qed}{\nobreak \ifvmode \relax \else
      \ifdim\lastskip<1.5em \hskip-\lastskip
      \hskip1.5em plus0em minus0.5em \fi \nobreak
      \vrule height0.75em width0.5em depth0.25em\fi}
\newcommand{\conf}[1]{\ensuremath\mathcal{C}(#1)}
\newcommand{\confext}[1]{\ensuremath\tilde{\mathcal{C}}(#1)}
\newcommand{\children}[1]{\ensuremath\mathcal{S}_{#1}}
\newcommand{\subtree}[1]{\ensuremath\mathcal{S}^\star_{#1}}
\newcommand{\maxheightchildren}[1]{\ensuremath\mathcal{\hat S}_{#1}}
\newcommand{\calO}{\ensuremath\mathcal{}}
\begin{document}

\title{Self-Stabilizing Balancing Algorithm\\ for Containment-Based Trees%
\thanks{This research was initiated while Evangelos Bampas was with 
      MIS Lab. University of Picardie Jules Verne, France. It was partially funded by 
      french National Research Agency (08-ANR-SEGI-025). 
      Details of the project on http://graal.ens-lyon.fr/SPADES.
}
}




\author{\IEEEauthorblockN{Evangelos Bampas\IEEEauthorrefmark{1}, Anissa Lamani\IEEEauthorrefmark{2},
Franck Petit\IEEEauthorrefmark{3}, and Mathieu Valero\IEEEauthorrefmark{4}}%
\IEEEauthorblockA{\IEEEauthorrefmark{1}School of Electrical and Computer Engineering,
National Technical University of Athens\\
15780 Zografou, Greece\\ 
Email: ebamp@cs.ntua.gr}%
\IEEEauthorblockA{\IEEEauthorrefmark{2}MIS Lab. University of Picardie Jules Verne, France\\
Email: anissa.lamani@u-picardie.fr}%
\IEEEauthorblockA{\IEEEauthorrefmark{3}LIP6 CNRS UMR 7606 - INRIA - UPMC Sorbonne Universities, France\\
Email: Franck.Petit@lip6.fr}%
\IEEEauthorblockA{\IEEEauthorrefmark{4}Orange Labs\\
92130 Issy-les-Moulineaux, France\\
Email: mathieu.valero@lip6.fr}}


\maketitle

\begin{abstract}
Containment-based trees encompass various handy structures such as B+-trees, R-trees and M-trees.
They are widely used to build data indexes, range-queryable overlays, publish/subscribe systems both in centralized and
distributed contexts.
In addition to their versatility, their balanced shape ensures an overall satisfactory performance.
Recently, it has been shown that their distributed implementations can be 
fault-resilient.
However, this robustness is achieved at the cost of unbalancing the structure.
While the structure remains correct in terms of searchability, its performance can be significantly decreased.
In this paper, we propose a distributed self-stabilizing algorithm to balance containment-based trees.
\end{abstract}

\begin{IEEEkeywords}
self-stabilization, balancing algorithms, containment-based trees
\end{IEEEkeywords}

\section{Introduction}
\label{sec:introduction}

Several tree families are based on a containment relation. Examples include B+-trees~\cite{Btree}, R-trees~\cite{Rtree},
and M-trees~\cite{Mtree}. 
They are respectively designed to handle intervals, rectangles, and balls.
Their logarithmic height ensures good performance for basic insertion/deletion/search primitives.
Basically they rely on a partial order on node labels.
They can be specified as follows:
\begin{enumerate}
	\item {\em Tree nature}. The graph is acyclic and connected.
	\item {\em Containment relation}. Every non-root node $n$ satisfies $label(n) \sqsubseteq label(father(n))$.
	\item {\em Bounded degrees}. The root has between 2 and $M$ children, each internal node has between $m$ and $M$ children ($M \geq 2m$).
	\item {\em Balanced shape}. All leaves are at the same level.
\end{enumerate}
In a distributed context, no node has a global knowledge of the system as each node has only access to its local information.
As a consequence, the aforementioned invariants should be expressed as ``local'' constraints; at the level of a node.
Operations preserving or restoring those invariants should also be ``as local as possible''.

Preserving the {\em tree nature} has been addressed in previous work~\cite{DRtree,VBI,PexpanderVtree}.
It is tightly related to the distribution model of the structure, the centralized case being far less stressing.
The {\em containment relation} is easy to preserve as node labels can be ``enlarged'' or ``shrunk''.
The {\em bounded node degrees} are ensured with $split$ or $collapse$ primitives when a node has too many
or too few children, respectively.
The {\em balanced shape} of the tree is especially important in terms of performance; in conjunction with node
degree bounds, it ensures that the tree has logarithmic height. A number of approaches have been proposed in the
literature in order to balance tree overlays. However, these solutions have many limitations in particular when applied
to containment-based tree overlays.  

The works in~\cite{DRtree,PexpanderVtree} have the extra property of being \emph{self-stabilizing}.
A self-stabilizing system, as introduced in~\cite{D74}, is guaranteed to converge to the intended behavior in finite
time, regardless of the initial states of nodes.
Self-stabilization~\cite{D74} is a general technique to design distributed systems that can handle arbitrary transient
faults. 


Figure~\ref{fig:model} shows the overlay lifecycle borrowed from~\cite{PexpanderVtree}.
Rectangles refer to states of the distributed tree.
Transitions are labelled with events ({\it join} and {\it faults}) or algorithms ({\it repair} and {\it balance}) triggering them~\footnote{A balanced tree might remain balanced or might become searchable in case of {\it faults}, while it might remain balanced in case of {\it joins}.
However for the sake of readability only most stressing ---which are also the most likely--- transitions are shown.}.
Initially the tree is empty, then some peers join the overlay building a searchable tree.
The balancing algorithm eventually balances the tree.
In case of {\it joins} the tree may become unbalanced.
In case of {\it faults} the tree may become disconnected.
The {\it repair} algorithm eventually reconnects the tree and fixes containment relation.
The distinction between searchable and balanced states emphasizes a separation of concerns between correction and
performance.
Basically {\it repair} is about correction, while {\it balance} is about performance.
It is also interesting to point out that if the {\it balance} algorithm is self-stabilizing, then the {\it join}
algorithm does not have to deal with performance.
Moreover, if the {\it repair} algorithm is self-stabilizing, a peer quitting the overlay does not have to communicate
with any other peer; departures can be handled as faults.
\begin{figure}[bt]

\centering

\tikzstyle{action}=[]
\tikzstyle{state}=[rectangle,draw,thick]

\begin{tikzpicture}
	\node[state]			(balanced)									
		{Balanced};
	\node[state]			(disconnected)	[below=2.5cm of balanced]			
{Disconnected};
	\node[state]			(searchable)	[below left=1cm and 0cm of balanced]	{Searchable};
	\node[action]			(join)		 	[left=0.5cm of searchable]			
{join};
	\node[action]			(balance)		[above of=searchable]	{balance};
	\node[action]			(repair)		[below of=searchable]	{repair};	
	\draw[->]				(join) -- (searchable);
	\draw[->,snake=snake]	(balanced) -- (disconnected) node [right,text centered,midway]{{\it faults}};
	\draw[-]				(disconnected) -| (repair);
	\draw[->]				(repair) -- (searchable);
	\draw[-]				(searchable) -- (balance);
	\draw[->]				(balance) |- (balanced);
	\draw[-,snake=snake]	(searchable.east) -- ++(0.8,0);
\end{tikzpicture}

\caption{Containment-based tree overlay lifecycle.
The overlay is balanced when all invariants are ensured, searchable when all but the fourth invariant are ensured, and
disconnected when at least the {\it tree nature} invariant is violated.}
\label{fig:model}
\end{figure}
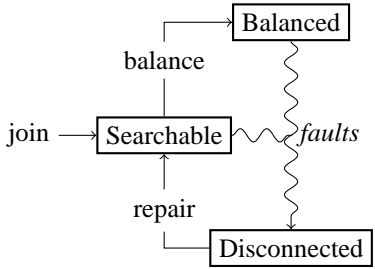

\subsection{Related Work}

DR-tree~\cite{DRtree} is a distributed version of R-trees~\cite{Rtree} developed to build a brokerless peer to peer
publish/subscribe system.
Each computer stores a leaf and some of its consecutive ancestors.
The closer node to the root stored by a computer is called its ``first node''.
Each computer knows some ancestors of its first node; from its grand father to the root.
The tree may become unbalanced in case of faults.
When a computer detects that the computer storing the father of its first node is crashed, it broadcasts a message in
the subtree starting from its first node.
All leaves belonging to that subtree will be reinserted elsewhere starting from an ancestor of the crashed peer's first
node.
As a consequence, no balancing primitive is used.
However, up to half of the nodes of the tree may have to be reinserted (if faults occur near the root).

SDR-tree~\cite{SDRtree} is an indexing structure distributed amongst a cluster of servers.
Its main aim is to provide efficient and scalable range queries.
It builds balanced full binary trees satisfying containment relation.
Each computer stores exactly one leaf and one internal node.
Instead of classic $split$ and $collapse$ operations, this paper extensively uses subtree heights to guarantee the
balanced shape of the structure.
They adapt the concept of rotation~\cite{AVL} to multidimensional data.
However they don't mention how to repair the structure in case of crashes.

VBI~\cite{VBI}, a sequel of BATON~\cite{BATON}, is a framework to build distributed spatial indexes on top of binary trees.
It provides default implementations for all purely structural concerns of binary trees: maintaining father and children
links, rotations, etc.
Developers only have to focus on several operations, mostly related to node labeling (such as $split$ and $collapse$).
However, despite their interesting approach of mapping any tree on a binary one, they do not provide a handy way to
parameterize the system.
While modifying node degree bounds should be a simple way to tune system performances, the fixed degree of the core
structure (a tradeoff between reusability and performance) cannot be bypassed.

The solution proposed in~\cite{PexpanderVtree} also deals with full binary trees satisfying containment relation.
The distribution is the same as in the SDR-tree; each computer stores exactly one leaf and one internal node.
The core contribution of this paper is to prove that if the leaf and the internal node held by each machine are randomly
chosen, then the graph between computers (namely, the communication graph) is very unlikely to be disconnected in
case of faults.
More precisely, faults disconnect the logical structure, but not the communication graph.
This paper proposes an algorithm exploiting the connectivity of the communication graph to repair the
logical structure restoring the {\it tree nature}, {\it containment relation} and {\it bounded node degrees}.
This approach is shown as much cheaper than~\cite{DRtree}, both in terms of recovery time and cost in messages.
However the restored tree can be slightly unbalanced and the restoration of its shape is not addressed in the paper.

The algorithm proposed in~\cite{PexpanderVtree} has the desirable property of being self-stabilizing. 
There exist several distributed, self-stabilizing algorithms in the literature that maintain a global property based 
on node labels, {\em e.g.}, \cite{HM01a,BDV05,DLPT}.  The solutions in \cite{HM01a,BDV05} achieve 
the required property (resp. Heap and BST) by reorganizing the key in the tree.  They have 
no impact on the tree topology.  The solution in \cite{DLPT} arranges both node labels (or, keys) 
and the tree topology.  None of the above self-stabilizing solutions deal with the balanced property.



\subsection{Contribution}

In this paper, we first show that edge swapping can be used as a balancing primitive.
Then we propose a distributed self-stabilizing algorithm balancing any containment-based tree,
such as B+-trees~\cite{Btree}, R-trees~\cite{Rtree} and M-trees~\cite{Mtree}.
Our algorithm can be used in~\cite{SDRtree,PexpanderVtree} to enhance the performance of repaired trees, or
in~\cite{VBI,BATON} as a ``core'' balancing mechanism.
We further prove the correctness of the algorithm and 
investigate its practical convergence speed via simulations.

\subsection{Roadmap}

In Section~\ref{sec:primitive}, we argue that edge swapping is practically better suited than rotations to balance
containment-based trees. In Section~\ref{sec:model}, we present the model we use to describe our algorithm. In
Section~\ref{sec:algo}, we propose a distributed self-stabilizing algorithm that relies on edge swapping to balance
any containment-based tree. 
In the same section, we prove the termination and 
correctness of the algorithm. In Section~\ref{sec:simulation}, we investigate the practical termination time of our
algorithm via simulations. Section~\ref{sec:conclusion} contains some concluding remarks and possible directions
for future work.

\section{Balancing Primitive}
\label{sec:primitive}

Let $\alpha \in \mathbb{N}$.
A tree is balanced iff all its nodes are balanced.
A node is balanced iff the heights of any pair of its children differ at most by $\alpha$.
In the remainder of this paper, we make two assumptions:
\begin{itemize}
	\item $\alpha \geq 1$.
	\item each non-leaf node has at least two children.
\end{itemize}
The assumption on~$\alpha$ is weaker than those of~\cite{SDRtree,PexpanderVtree,VBI} and still ensures logarithmic
height of the tree~\cite{DBLP:journals/tc/LuccioP76}.
The degree assumption is also weaker than those of~\cite{SDRtree,PexpanderVtree,VBI} and allows more practical tree
configurations.

Basically, a balancing primitive is an operation that eventually reduces the difference between the heights of some
subtrees by modifying several links of the tree.
Those link modifications may have some semantic impact if they break node labeling invariants.


\subsection{Rotation} In BST~\cite{BST} and AVL~\cite{AVL}, the well known rotation primitive is used to ensure the
balanced shape of the tree.
However, when dealing with structures relying on partially ordered data, rotations do have a semantic impact.
Moreover, in a distributed context, if a node~$n$ and its father or grandfather concurrently execute rotations, they may
both ``write'' $father(n)$.
It follows that the use of distributed rotations requires synchronization to preserve {\it tree structure}.

\subsection{Edge swapping}
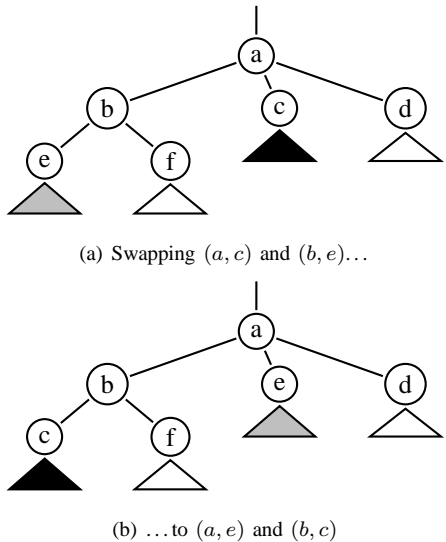
\begin{figure}[tpb]
\centering
\subfigure[Swapping $(a,c)$ and $(b,e)$\dots]{
\begin{pspicture}(0,-3)
\pstree[nodesep=1pt,treesep=.7,levelsep=.7]{\Tr{ }}{
	\pstree{\Tcircle{a}}{
		\pstree{\Tcircle{b}}{
			\pstree{\Tcircle{e}}{\Tfan[fillstyle=solid,fillcolor=lightgray]}
			\pstree{\Tcircle{f}}{\Tfan}
		}
		\pstree{\Tcircle{c}}{\Tfan[fillstyle=solid,fillcolor=black]}
		\pstree{\Tcircle{d}}{\Tfan}
	}
}
\end{pspicture}
}
\subfigure[\dots to $(a,e)$ and $(b,c)$]{
\begin{pspicture}(0,-3)
\pstree[nodesep=1pt,treesep=.7,levelsep=.7]{\Tr{ }}{
	\pstree{\Tcircle{a}}{
		\pstree{\Tcircle{b}}{
			\pstree{\Tcircle{c}}{\Tfan[fillstyle=solid,fillcolor=black]}			
			\pstree{\Tcircle{f}}{\Tfan}
		}
		\pstree{\Tcircle{e}}{\Tfan[fillstyle=solid,fillcolor=lightgray]}
		\pstree{\Tcircle{d}}{\Tfan}
	}
}
\end{pspicture}
}
\caption{\label{fig:edge-swap}An example of edge swapping. The fathers of~$c$ and~$e$ are modified. The label of~$b$
needs to be updated.}
\end{figure}
Given two edges~$(a,b)$ and~$(c,d)$, swapping them consists in exchanging their tails (resp.\ heads).
Formally, $swap((a,b),(c,d))$ modifies the edges of the graph as follows: $E(G) \coloneqq
E(G)-\{(a,b),(c,d)\}+\{(a,d),(c,b)\}$. Figure~\ref{fig:edge-swap} contains an illustration.
With the algorithm that we present in Section~\ref{sec:algo}, concurrent swaps cannot conflict.
The use of this balancing primitive is thus more suitable in a distributed context.



\section{Model}
\label{sec:model}

In this paper, we consider the classical local shared memory model, known as the state model, that was introduced by
Dijkstra~\cite{D74}. In this model, communications between neighbours are modeled by direct reading of variables
instead
of exchange of messages. The program of every node consists in a set of shared variables (henceforth referred to as
variable) and a finite number of actions. Each node can write in its own variables and read its own variables and those
of its neighbors. Each action is constituted as follows:

\begin{center} $<Label>::<Guard>$  $<Statement>$ \end{center}

The guard of an action is a boolean expression involving the variables of a node~$u$ and its neighbours. The statement
is an action which updates one or more variables of~$u$. Note that an action can be executed only if its guard is true.
Each execution is decomposed into steps. 

The state of a node~$u$ is defined by the value of its variables. It consists of the following pieces of information:
\begin{itemize}
  \item An integer value which we call the \emph{height value} or \emph{height information} of node~$u$.
  \item Two arrays that contain the IDs of the children of~$u$ and their height values.
\end{itemize}
%
For the sake of generality, we do not make any
assumptions about the number of bits available for storing height information, thus the height value of a node can be an
arbitrarily large (positive or negative) integer.

The \emph{configuration} of the system at any given time~$t$ is the aggregate of the states of the individual
nodes. We will sometimes use the term ``configuration'' to refer to the rooted tree formed by the nodes.
\begin{definition}
 We denote by~$\conf{t}$ the configuration of the system at time~$t\geq 0$. For a node~$u$, we
denote by
$h_u(t)$ the value of its height variable in~$\conf{t}$, by $h^\star_u(t)$ its actual height in the tree in~$\conf{t}$,
by $\children{u}(t)$ the set of children of~$u$ in~$\conf{t}$, and by $\subtree{u}(t)$ the set of nodes in the subtree
rooted at~$u$ in~$\conf{t}$.
\end{definition}

%

Let $\conf{t}$ be a configuration at instant $t$ and let $I$ be an action of a node~$u$. $I$ is {\em enabled} for~$u$
in~$\conf{t}$ if and only if the guard of~$I$ is satisfied by~$u$ in~$\conf{t}$. Node~$u$ is enabled in~$\conf{t}$ if
and only if at least one action is enabled for~$u$ in~$\conf{t}$. Each step consists of two sequential phases
 executed atomically:
($i$) Every node evaluates its guard;
($ii$) One or more enabled nodes execute their enabled actions. 
When the two phases are done, the next step begins. 
This execution model is known as the \emph{distributed daemon}~\cite{BGM89}. 
To capture asynchrony, we assume a semi-synchronous scheduler which picks any non-empty subset of the enabled nodes in
the current configuration and executes their actions simultaneously. We do not make any fairness assumptions, thus the
scheduler is free to effectively ignore any particular node or set of nodes as long as there exists at least one
other node that can be activated. 

%
%
%
%
%

\begin{definition}
 We refer to the activation of a non-empty subset~$A$ of the enabled nodes in a given configuration as an
\emph{execution step}. If $\mathcal{C}$ is the configuration of the system before the activation of~$A$ and
$\mathcal{C}'$ is the resulting configuration, we denote this particular step by
$\mathcal{C}\longrightarrow_A\mathcal{C}'$. 
An \emph{execution} starting from an initial configuration~$\mathcal{C}_0$
is a sequence $\mathcal{C}_0\longrightarrow_{A_1}\mathcal{C}_1 \longrightarrow_{A_2}\mathcal{C}_2\longrightarrow\dots$ 
of execution steps. Time is measured by the number of steps that have been executed. An execution is completed when it
reaches a configuration in which no node is enabled. After that point, no node is ever activated.
\end{definition}


If a node was enabled before a particular execution step, was not activated in
that step, and is not enabled after that step, then we say that it was \emph{neutralized} after that step.

\begin{definition}
 Given a particular execution, an \emph{execution round} (or simply \emph{round}) starting from a
configuration~$\mathcal{C}$ consists of the minimum-length sequence of steps in which every enabled node
in~$\mathcal{C}$ is activated or neutralized at least once.
\end{definition}

\begin{remark}
 To simplify the presentation, we will assume throughout the rest of the paper that the arrays containing the IDs of the
children of each node and copies of their height values are consistent with the height values stored by the children
themselves in the current configuration of the system. It should be
clear that maintaining these copies up to date can be achieved with a constant overhead per execution step.
\end{remark}

\section{Self-Stabilizing Balancing Solution}
\label{sec:algo}

In this section, we present our self-stabilizing algorithm for balancing containment-based trees, we provide some
termination properties, and we prove that any execution converges to a balanced tree.


Assuming that each node knows the correct heights of its subtrees, a very simple distributed self-stabilizing algorithm
balances the tree: each node uses the {\it swap} operation whenever two of its children heights are ``too different''.
However, in a distributed context, height information may be inaccurate. This inaccuracy could lead the aforementioned
naive balancing algorithm to make some ``wrong moves.'' For example, in Figure~\ref{fig:edge-swap}, assume that $c$
``thinks'' that its own height is~$4$ and $e$ ``thinks'' that its own height is~$9$ while their actual heights are
respectively~$10$ and $8$; the illustrated {\it swap} would actually unbalance the tree.
On one hand, maintaining heights in a self-stabilizing fashion is easy and ensures that height information
will eventually be correct. On the other hand, no node can know when height information is correct; as a consequence,
height maintenance and balancing have to run concurrently. But their concurrent execution raises an obvious risk: the
{\it swap} operation modifies the tree structure and could thus compromise the convergence of the height maintenance
subprotocol.

Basically, the algorithm that we propose in this section consists of two concurrent actions: one maintaining heights,
the other one balancing the tree.
We formalize both actions and prove that their concurrent execution converges to a balanced tree.


\subsection{Algorithm}

At any time~$t$, each node~$u$ is able to evaluate the following functions and predicates for itself and its children:
\begin{itemize}
  \item $\mathrm{max}(x)$: returns any $v\in\children{x}(t)$ such that $h_v(t) = \max_{w \in
\children{x}(t)} h_w(t)$. If $x$ is a leaf, returns~$\bot$ (undefined).
  \item $\mathrm{min}(x)$: returns any $v\in\children{x}(t)$ such that $h_v(t) = \min_{w \in
\children{x}(t)} h_w(t)$. If $x$ is a leaf, returns~$\bot$.
  \item $\mathrm{stable}(x)$: returns true if and only if $h_x(t) = 1 + \max_{w \in \children{x}(t)} h_w(t)$. If $x$ is
a leaf, returns true if and only if $h_x(t) = 0$.
  \item $\mathrm{balanced}(x)$: returns true if and only if for all $z,z'\in\children{x}(t)$,
$|h_z(t)-h_{z'}(t)|\leq\alpha$.
\end{itemize}

When there are more than one possible return values for~$\mathrm{max}(x)$ and~$\mathrm{min}(x)$, an arbitrary choice is
made. For simplicity, we can consider that the candidate node with the smallest ID is returned, although this will not
be crucial for our results.

\newcounter{action}
\newcommand{\GA}[2]{\stepcounter{action}{\bf G\theaction}  #1 & {\bf S\theaction} #2}

Each node~$u$ executes the algorithm in Figure~\ref{alg:algo} (the value of~$\mathrm{stable}(\bot)$ is assumed to be
false).
%
\noindent\setcounter{action}{0}
\begin{figure*}
\begin{tabular}{@{}ll@{}} \toprule
{Guard} & {Statement} \\ \midrule
\GA{$\neg\mathrm{stable}(u)$}
{\begin{minipage}[t]{0.5\textwidth}
$h_u \coloneqq 1 + \max_{w \in \children{u}} h_w$ (or $h_u \coloneqq 0$ if $u$ is a leaf)
\end{minipage}} \\[2ex]
\GA{\begin{minipage}[t]{0.42\textwidth}
$\mathrm{stable}(u) \wedge \mathrm{stable}(\mathrm{max}(u)) \wedge$ $\neg\mathrm{balanced}(u)$
\end{minipage}}
{\begin{minipage}[t]{0.5\textwidth}
swap edges~$(u,\mathrm{min}(u))$ and $(\mathrm{max}(u),\mathrm{max}(\mathrm{max}(u))$
\end{minipage}} \\
\bottomrule
\end{tabular}
\caption{\label{alg:algo}Distributed self-stabilizing balancing algorithm}
\end{figure*}
%
%
Note that the guards~\textbf{G1} and~\textbf{G2} are mutually exclusive and, by definition, the height update
action effectively has priority over the edge swapping action. We say that a node is \emph{enabled for a height update}
if \textbf{G1} is true, or \emph{enabled for a swap} if \textbf{G2} is true.

When a node~$u$ performs a swap, four nodes are involved: $u$ itself, $\mathrm{max}(u)$, $\mathrm{min}(u)$, and
$\mathrm{max}(\mathrm{max}(u))$. We refer to these nodes as the \emph{source}, the \emph{target}, the \emph{swap-out},
and the \emph{swap-in} nodes of the swap, respectively.

The following proposition can be proved directly from the definitions. We state it without proof.
\begin{proposition}
 If $\mathcal{C}$ is a rooted tree with root~$r$, then after any execution step $\mathcal{C}\longrightarrow_A
\mathcal{C}'$, $\mathcal{C}'$ is still a directed tree with root~$r$.
\end{proposition}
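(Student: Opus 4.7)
The plan is to observe first that height update actions (\textbf{S1}) only modify the variable $h_u$ and do not touch any edge of the tree, so the only actions that can affect the tree structure are swaps (\textbf{S2}). It therefore suffices to analyze the combined effect on the parent pointers of all swaps performed simultaneously in the step. The goal is to establish, for the resulting configuration $\mathcal{C}'$, that (i) the root $r$ has no incoming edge, (ii) every other node has exactly one incoming edge, and (iii) the resulting graph is acyclic; together these imply that $\mathcal{C}'$ is a directed tree with root $r$.

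Point (i) is immediate, since every new edge added by a swap at $u$ ends at either $\min(u)$ or $\max(\max(u))$, both of which are proper descendants of $u$ and thus distinct from $r$. For (ii), I would examine, for each non-root node $x$ with original parent $p$, which swaps can modify the incoming edge at $x$: the original edge $(p,x)$ is removed iff either $p$ itself swaps with $x=\min(p)$, or the grandparent $q$ of $x$ swaps with $p=\max(q)$ and $x=\max(p)=\max(\max(q))$; dually, a new edge ending at $x$ is added iff some swap at $u$ has $x=\max(\max(u))$ (new parent $u$) or $x=\min(u)$ (new parent $\max(u)$). A careful matching of these cases shows that a new edge ending at $x$ is added precisely when the original one is removed, and the new parent is uniquely determined. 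The key observation is that the guard \textbf{G2} imposes $\neg\mathrm{balanced}(u)$, which forces $\min(u)\neq\max(u)$ and rules out the otherwise pathological scenarios in which two distinct swaps would deposit different edges at the same $x$.

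For (iii), I would use the structural observation that a swap performed by $u$ only modifies edges with both endpoints in $\subtree{u}$: the removed edges $(u,\min(u))$ and $(\max(u),\max(\max(u)))$ and the added edges $(u,\max(\max(u)))$ and $(\max(u),\min(u))$ are all internal to the original subtree rooted at $u$. Hence each swap preserves its own subtree set-wise, and I would argue that the sequence of new parents starting from any node is non-increasing in original depth (staying flat for at most one step during a sibling move before strictly decreasing), and must therefore eventually reach $r$, which rules out cycles. The main obstacle is the concurrent case in which two swap neighborhoods overlap, most acutely when $v=\max(u)$ and both $u$ and $v$ swap: both swaps simultaneously write to $v$'s children array, and compatibility must be verified. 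This reduces to checking that the children of $v$ removed by the two swaps (namely $\max(v)$ by $u$'s swap and $\min(v)$ by $v$'s own swap) are distinct, which follows from $\neg\mathrm{balanced}(v)$; the symmetric check for the added children uses the absence of cycles in the original tree.
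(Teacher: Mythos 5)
The paper states this proposition without proof (``can be proved directly from the definitions. We state it without proof''), so there is no paper argument to compare against; what matters is whether your outline is sound, and it is. You correctly reduce the problem to swaps only, and you identify the two genuinely dangerous overlap scenarios: a node~$x$ that would be simultaneously the swap-out of one source and the swap-in of another (which would give~$x$ two candidate new parents), and a source~$u$ together with $\max(u)$ both swapping (which causes two concurrent writes to the same children array). In both cases the guard $\neg\mathrm{balanced}(\cdot)$, which forces $\min(\cdot)\neq\max(\cdot)$, together with uniqueness of parents and grandparents in the original tree, is exactly what rules out a conflict, and you say so. Your acyclicity argument via monotone original-depth along the new parent chain is also the right one.

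One detail in the acyclicity step deserves to be made explicit rather than parenthetical: you assert that a flat (sibling) move can occur at most once before depth strictly decreases, but you should justify it. The justification is short: after the move $x\mapsto\max(u)$ with $x=\min(u)$, the node $\max(u)$ cannot itself be a swap-out, because that would require $\max(u)=\min(u)$ (contradiction with $\neg\mathrm{balanced}(u)$); hence $\max(u)$'s new parent is either its original parent~$u$ or, if $\max(u)$ is a swap-in, a node two levels higher — in both cases strictly shallower. With that filled in, the sketch is a complete proof and is faithful to what the paper's ``directly from the definitions'' remark is implicitly asking for.
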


\subsection{Termination Properties}

In the following section, we will prove that for any initial configuration, every possible execution is completed in a
finite number of steps. For the moment, we give two properties of the resulting tree,
assuming of course that the execution consists of a finite number of steps.

The proof of the following proposition can be found in Appendix~\ref{apdx:lem:stablestate}.
\begin{proposition} \label{lem:stablestate}
  If the execution is completed, then in the final configuration~$\mathcal{C}(t^\star)$ all nodes are balanced and have
correct height information.
\end{proposition}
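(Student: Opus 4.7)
The plan is to unpack what ``the execution is completed'' means in terms of the guards, and then establish the two conclusions (correct height information, balance) in that order, since balance will turn out to be an immediate consequence once heights are known to be correct.

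First I would note that completion means that for every node $u$ neither \textbf{G1} nor \textbf{G2} is enabled in $\conf{t^\star}$. Since \textbf{G1} is $\neg\mathrm{stable}(u)$, its universal failure tells us that $\mathrm{stable}(u)$ holds for every $u$ in $\conf{t^\star}$: every leaf satisfies $h_u(t^\star)=0$, and every internal node satisfies $h_u(t^\star)=1+\max_{w\in\children{u}(t^\star)} h_w(t^\star)$. For an internal node $u$, the failure of \textbf{G2} combined with $\mathrm{stable}(u)$ and $\mathrm{stable}(\mathrm{max}(u))$ (which holds because $\mathrm{max}(u)$ is itself an internal-or-leaf node with \textbf{G1} disabled) then forces $\mathrm{balanced}(u)$ to be true. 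Leaves are balanced by vacuity.

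Next I would prove by bottom-up induction on the actual distance from $u$ to its deepest descendant leaf (equivalently, on $h^\star_u(t^\star)$) that $h_u(t^\star)=h^\star_u(t^\star)$ for every node. For the base case, if $u$ is a leaf then $h^\star_u(t^\star)=0$ and, by stability, $h_u(t^\star)=0$. For the inductive step, let $u$ be an internal node and assume the claim for every $w\in\children{u}(t^\star)$ (whose actual heights are strictly smaller). Using the consistency assumption from the Remark, the children's height values in $u$'s local arrays coincide with those stored at the children themselves. Stability of $u$ then gives
\[
h_u(t^\star) \;=\; 1 + \max_{w\in\children{u}(t^\star)} h_w(t^\star) \;=\; 1 + \max_{w\in\children{u}(t^\star)} h^\star_w(t^\star) \;=\; h^\star_u(t^\star),
\]
completing the induction.

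Finally, I would combine the two facts: for every internal node $u$, $\mathrm{balanced}(u)$ holds, so $|h_z(t^\star)-h_{z'}(t^\star)|\leq\alpha$ for every pair $z,z'\in\children{u}(t^\star)$; by the preceding induction these stored heights equal the actual heights $h^\star_z(t^\star)$ and $h^\star_{z'}(t^\star)$, which is exactly the definition of a balanced node. Hence the tree in $\conf{t^\star}$ is balanced. I do not anticipate a real obstacle here: the argument is essentially a bottom-up traversal, and the only care required is to treat leaves separately (so that $\mathrm{stable}$ and $\mathrm{max}(\bot)$ are handled correctly) and to invoke the consistency remark so that the heights $u$ reads from its children's slots are legitimately the children's own height variables.
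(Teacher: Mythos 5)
Your proof is correct and follows essentially the same route as the paper: a bottom-up induction on actual height using the failure of guard \textbf{G1} to deduce $\mathrm{stable}(u)$ and hence correct heights, and the failure of \textbf{G2} (together with $\mathrm{stable}(u)$ and $\mathrm{stable}(\mathrm{max}(u))$) to deduce $\mathrm{balanced}(u)$. The only stylistic difference is that you first extract the stored-height predicate $\mathrm{balanced}(u)$ for all nodes at once and then run the height-correctness induction separately, whereas the paper folds both conclusions into a single induction; you are also a touch more explicit that $\mathrm{stable}(\mathrm{max}(u))$ follows from $\mathrm{max}(u)$ itself having \textbf{G1} disabled.
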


The next proposition, regarding the height of the resulting tree, follows directly from the analysis
in~\cite[Sections~II and~III]{DBLP:journals/tc/LuccioP76}. In our case, the initial conditions of the recurrence studied
in~\cite{DBLP:journals/tc/LuccioP76} are slightly different, but this does not affect the asymptotic behavior of the
height.

\begin{proposition} \label{prop:logheight}
 If the execution is completed in time~$t^\star$, then in the final configuration $h^\star_r(t^\star) = \calO(\log n)$,
where $r$ is the root and $n$ is the number of nodes in the system.
\end{proposition}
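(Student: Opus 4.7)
The plan is to follow the classical argument used for AVL trees (and, in more generality, the one carried out by Luccio and Pagli). By Proposition~\ref{lem:stablestate}, in the final configuration every internal node is balanced, so the heights of any two of its children differ by at most~$\alpha$; by our standing assumption every internal node also has at least two children. These are the only two structural facts the bound requires.

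Let $N(h)$ denote the minimum number of nodes of any rooted tree whose root has actual height~$h$ and which simultaneously satisfies these two properties. Clearly $n \geq N(h^\star_r(t^\star))$, so it suffices to show that $N(h)$ grows exponentially in~$h$. To obtain a recurrence, I would consider a minimizer with root~$r$: one of $r$'s children must realize the height $h-1$, while every other child has height at least $h-1-\alpha$. Giving $r$ exactly two children, one of height $h-1$ and one of the smallest height allowed, yields
$$N(h) \;\geq\; 1 + N(h-1) + N\bigl(\max\{0,\,h-1-\alpha\}\bigr),$$
with base case $N(0)=1$.

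The final step is to bound this linear recurrence from below. Its characteristic equation $x^{\alpha+1} = x^{\alpha} + 1$ has a unique real root $\beta>1$ depending only on~$\alpha$, and a straightforward induction then gives $N(h) \geq c\,\beta^{h}$ for some constant $c>0$. Combining with $n \geq N(h^\star_r(t^\star))$ produces $h^\star_r(t^\star) \leq \log_\beta(n/c) = \mathcal{O}(\log n)$.

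The only obstacle, really a bookkeeping matter, is that our setting (arbitrary internal degree $\geq 2$, possibly nonstandard base cases) differs slightly from the one treated in~\cite{DBLP:journals/tc/LuccioP76}; but since any admissible tree has at least as many nodes as the minimizer above at every level, the exponential lower bound, and hence the $\mathcal{O}(\log n)$ conclusion, carries over unchanged---this is precisely the mild discrepancy the statement alludes to.
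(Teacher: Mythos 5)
Your argument is correct and takes the same route as the paper, which simply cites the analysis of Luccio and Pagli~\cite{DBLP:journals/tc/LuccioP76} for this proposition: balance (from Proposition~\ref{lem:stablestate}) plus minimum degree two give the Fibonacci-like lower bound $N(h) \geq 1 + N(h-1) + N(\max\{0,h-1-\alpha\})$, whose characteristic root $\beta>1$ forces $h^\star_r(t^\star) = \mathcal{O}(\log n)$. Your closing ``bookkeeping'' remark is precisely the caveat the paper itself makes, namely that its initial conditions differ slightly from those in~\cite{DBLP:journals/tc/LuccioP76} without affecting the asymptotic behavior.
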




\subsection{Proof of Convergence}
\label{subsec:proof}

We give an overview of the proof with references to the appropriate appendices for the technical parts.

The concept of a ``bad node'' will be useful in the analysis of the algorithm.
\begin{definition}[Bad nodes] \label{def:badnodes}
 In a given configuration~$\conf{t}$, an internal node~$u$ is a \emph{bad node} if $h_u(t)\leq
\max_{v\in\children{u}(t)} h_v(t)$. A leaf is a \emph{bad node} if $h_u(t)<0$.
\end{definition}
Intuitively, a bad node is a node that ``wants'' to increase its height value. The proof of the following key lemma can
be found in Appendix~\ref{apdx:cor:nobadnodes}.

\begin{lemma} \label{cor:nobadnodes}
 If $\conf{t}$ contains no bad nodes, then for all $t'\geq t$, $\conf{t'}$ contains no bad nodes.
\end{lemma}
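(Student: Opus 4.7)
The plan is to argue by induction on~$t'-t$, reducing the lemma to a single-step claim: if $\conf{t'}$ contains no bad nodes and $\conf{t'} \longrightarrow_A \conf{t'+1}$ for some activation set~$A$, then $\conf{t'+1}$ also contains no bad nodes. The base case $t'=t$ holds by hypothesis. For the inductive step, I observe that only two mechanisms can modify the state in one step: G1 changes only the updating node's own height, while G2 only rewires the children lists of the source and the target (heights are never touched by a swap). So for each node~$v$, I would examine its role in the step and verify non-badness is preserved.

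The first ingredient I would establish is a monotonicity fact: for any non-bad node~$v$, $h_v$ does not increase across a step. If $v$ does not execute G1, its height is unchanged. If $v$ does execute G1, non-badness together with non-stability force $h_v(t') \geq 2 + \max_{w\in\children{v}(t')} h_w(t')$ for an internal node (respectively $h_v(t') \geq 1$ for a leaf), so $h_v(t'+1) < h_v(t')$. Consequently, the heights of pre-step children can only drop, which is precisely the direction that helps non-badness survive.

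With monotonicity in hand I would do a case analysis on~$v$. If $\children{v}$ is unchanged, then either $h_v$ is unchanged or, if $v$ executed G1, $h_v(t'+1) = 1 + \max_{w} h_w(t')$ strictly dominates every post-step child height by monotonicity. If $v$ is the source $u$ of a swap, then $u$ is stable (so does not update), and the swap-in node $e = \mathrm{max}(\mathrm{max}(u))$ satisfies $h_e(t') = h_{\mathrm{max}(u)}(t') - 1$ by stability of~$\mathrm{max}(u)$; the remaining original children of~$u$ had heights at most $h_{\mathrm{max}(u)}(t')$, so combined with monotonicity the new max of~$u$'s children stays strictly below $h_u(t'+1) = h_u(t')$. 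For the target $a = \mathrm{max}(u)$, the incoming child $c = \mathrm{min}(u)$ has $h_c(t') \leq h_a(t') - 2$ (from $\neg\mathrm{balanced}(u)$ and $\alpha \geq 1$), while the remaining old children of~$a$ had heights at most $h_e(t') = h_a(t') - 1$; monotonicity again keeps the new max strictly below $h_a(t'+1) = h_a(t')$. The mutual exclusion of G1 and G2 on a single node, together with the stability of the source and target of any swap, prevents any node from simultaneously changing both its height and its children set.

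The main delicate point will be concurrent interactions among distinct nodes in the same step: in particular, a swap-in or swap-out node simultaneously executing G1, or a node being source of one swap and target/swap-in/swap-out of another. The monotonicity fact absorbs the first concern, since I only ever need upper bounds on the post-step heights of the swapped nodes, and a simultaneous G1 update can only lower them. For swap-on-swap overlaps, I would enumerate the possible overlap patterns induced by the $\mathrm{max}$ and $\mathrm{min}$ relations and verify in each case that the stability preconditions still force the resulting children-set maxima to stay strictly below the relevant node heights. This enumeration is the most intricate part of the argument, but no new idea beyond monotonicity and the two-unit gap coming from stability and $\neg\mathrm{balanced}$ should be needed.
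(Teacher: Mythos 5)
Your proof is correct, and the central ingredients — the monotonicity of height variables at non-bad nodes, and the bounds forced by the swap guard (stability of the source and target together with $\neg\mathrm{balanced}$ and $\alpha\geq 1$) — are exactly those the paper relies on. The difference is one of route: the paper takes a detour through the quantitative Lemma~\ref{lem:badnodenotincr}, which says $|B'|\leq|B|$ in \emph{any} step, and obtains this via a component decomposition of the subgraph $\mathcal{G}(A_b)$ induced by activated bad nodes (Lemma~\ref{lem:badleaves}) together with an injection from newly-created bad nodes into bad nodes that became non-bad (Lemma~\ref{lem:newbadnodes}); the present lemma is then the $|B|=0$ special case. Your direct single-step induction avoids the injection/cardinality machinery entirely, because when the current configuration has no bad nodes we have $A_b=\emptyset$, so the only mechanism by which a non-bad node could become bad — a child increasing its height — is simply absent; in effect you are reproving the $A_b=\emptyset$ specialization of the paper's Lemmas~\ref{lem:nonact}, \ref{lem:nonbad}, and~\ref{lem:swap}. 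Your version buys simplicity and self-containment for this particular lemma; the paper's version buys a stronger statement that it reuses later (notably in Lemma~\ref{lem:badnodeneutralized}). One simplification you should apply to the swap-on-swap overlaps you flag as intricate: no enumeration of overlap patterns is needed, because a node $u$ whose children set changes in a swap receives at most two new children — the swap-in $\mathrm{max}(\mathrm{max}(u))$ of its own swap when $u$ is a source, and the swap-out $\mathrm{min}(father(u))$ of its father's swap when $u$ is a target — and the swap guards evaluated at $u$ and at $father(u)$ respectively bound the pre-step heights of these two nodes strictly below $h_u(t')$, while every surviving old child's height only shrinks by your monotonicity fact. This observation covers every node on a swap chain at once, without case splitting on the $\mathrm{max}$/$\mathrm{min}$ coincidences, and is precisely how the paper's Lemma~\ref{lem:swap} handles it.
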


It will be convenient to view any execution of the algorithm as consisting of two
phases: The first phase starts from the initial configuration and ends at the first configuration in which the system is
free of bad nodes. The second phase starts at the end of the first phase and ends at the first configuration in which no
node is enabled, i.e., at the end of the execution. In view of Lemma~\ref{cor:nobadnodes}, the system does not contain
any bad nodes during the second phase. We will prove that each phase is concluded in a finite number of steps, starting
from the second phase.

\subsubsection{Second Phase} \label{sec:secondphase}

We prove convergence for the second phase by bounding directly the number of height updates and the number of swaps that
may occur during that phase. The fact that there are no bad nodes in the second phase is crucial for bounding the number
of height updates. It follows that the number of swaps also has to be bounded, since a long enough sequence of steps in
which only swaps are performed incurs more height updates. The detailed proofs of these claims can be found in
Appendix~\ref{apdx:sec:secondphase}.

\begin{lemma} \label{lem:finiteheightupdates}
 Starting from a configuration with no bad nodes, no execution can perform an infinite number of height updates.
\end{lemma}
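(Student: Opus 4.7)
The plan is to exhibit a nonnegative, integer-valued monovariant that swaps leave invariant and that every height update strictly decreases. The natural choice is the potential $\Phi(t) = \sum_{u} h_u(t)$, and the argument will bound the total number of height updates in the phase by the starting value of $\Phi$.

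The first step is to show $\Phi(t) \geq 0$ throughout the second phase. I would argue by structural induction from the leaves upward that in any bad-node-free configuration $\conf{t}$, every node $u$ satisfies $h_u(t) \geq h^\star_u(t) \geq 0$: for a non-bad leaf, $h_u(t) \geq 0$ directly from the definition; for an internal node, non-badness together with the inductive hypothesis applied to its children gives $h_u(t) \geq 1 + \max_{w \in \children{u}(t)} h_w(t) \geq 1 + \max_{w \in \children{u}(t)} h^\star_w(t) = h^\star_u(t)$. Since Lemma~\ref{cor:nobadnodes} guarantees that no bad nodes ever appear in the second phase, this pointwise nonnegativity, and thus $\Phi(t) \geq 0$, is preserved throughout.

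Next I would analyze a single execution step. A swap only rewires two edges and does not touch any height variable, so it leaves $\Phi$ unchanged. When a non-stable node $u$ performs a height update, the fact that $u$ is non-bad (second phase) and the integrality of heights force $h_u(t) > 1 + \max_{w \in \children{u}(t)} h_w(t)$ for an internal node and $h_u(t) > 0$ for a leaf; hence the update strictly decreases $h_u$ by at least~$1$. Because simultaneous actions read only pre-step values and no action ever raises any $h_u$, the net change of $\Phi$ across one step is at most $-k$, where $k$ is the number of height updates performed in that step. Summing across the phase, the total number of height updates cannot exceed the initial value of $\Phi$, which is a finite integer, yielding the claim.

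The main subtlety, though modest, is reconciling swaps with the lower-bound invariant: a swap rearranges the child relation while leaving height values untouched, so one must verify that no node is turned into a bad node in the process. This is exactly what Lemma~\ref{cor:nobadnodes} delivers; the underlying reason is that the guard $\neg\mathrm{balanced}(u)$ together with $\alpha \geq 1$ forces $h_{\mathrm{min}(u)} \leq h_{\mathrm{max}(\mathrm{max}(u))} - 1$, so the swap-in child does not create a new maximum at the target node. With that invariant secured, the potential argument above closes the proof.
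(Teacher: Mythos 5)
Your proof is correct, and it reaches the conclusion by a genuinely different route. The paper argues by contradiction: it uses pigeonhole to find a single node that executes infinitely many height updates, observes that (with no bad nodes around) each such update strictly decreases that node's height variable so it eventually goes negative, and then invokes Lemma~\ref{lem:creationbadnode} to conclude that some node in its subtree is bad, contradicting Lemma~\ref{cor:nobadnodes}. You instead construct the global potential $\Phi(t)=\sum_u h_u(t)$, show it is nonnegative in every bad-node-free configuration (your leaves-upward induction is precisely the contrapositive of Lemma~\ref{lem:creationbadnode} under the global no-bad-node hypothesis, and correctly uses the integrality of heights to pass from $h_u>\max_w h_w$ to $h_u\geq 1+\max_w h_w$), and observe that swaps do not touch height variables while each height update decreases $\Phi$ by at least one. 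This bounds the total number of height updates in the phase by the initial value of~$\Phi$. Both proofs rest on the same two ingredients—preservation of bad-node-freeness (Lemma~\ref{cor:nobadnodes}) and the fact that non-badness forces $h_u\geq h^\star_u\geq 0$—but your monovariant packaging is arguably cleaner, avoids the detour through a separate lemma, and has the extra payoff of an explicit quantitative bound on the number of height updates, which the paper's contradiction argument does not provide. The closing paragraph about swaps not creating bad nodes is redundant (you already cite Lemma~\ref{cor:nobadnodes}, which is all that is needed) but harmless.
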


\begin{lemma} \label{lem:finiteswaps}
 Starting from a configuration with no bad nodes, no execution can perform an infinite number of swaps.
\end{lemma}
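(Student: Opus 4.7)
The plan is to reduce the bound on swaps to the bound on height updates from Lemma~\ref{lem:finiteheightupdates}. I will show that between any two consecutive height updates (and after the last height update), the number of swaps is $O(n^2)$; combined with the finite number of height updates, this yields a finite total number of swaps.

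First, I will analyze the effect of a single swap at source~$u$ with target $M=\mathrm{max}(u)$, swap-out $m$, and swap-in $M'=\mathrm{max}(M)$. The swap guard forces $h_u=1+h_M$, $h_M=1+h_{M'}$, and (since $\alpha\geq 1$ and $\neg\mathrm{balanced}(u)$) $h_m<h_{M'}$. After the swap, $M$ loses~$M'$ and gains~$m$, so I split into two cases: (i)~some other child of~$M$ also has height variable $h_{M'}$, in which case $M$ remains stable but its count of children with height variable equal to the maximum strictly decreases; (ii)~$M'$ was the unique maximum, in which case the new maximum among $M$'s children's height variables drops strictly below $h_{M'}$, making $M$ unstable while still not a bad node (because $h_M$ strictly exceeds every child's height variable).

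Next I will establish a persistence property inside any maximal height-update-free interval. The key observation is that $M$'s children set can change only when $M$ itself is the source or the target of some swap, and once $M$ is unstable it is eligible for neither role (both require $\mathrm{stable}(M)$). I will also verify that when $M$ plays the role of a source rather than a target, the swap removes $\mathrm{min}(M)$ (strictly below the maximum by $\neg\mathrm{balanced}(M)$) and inserts a grandchild whose height variable is the current maximum minus one, so the count of children of~$M$ attaining the current maximum cannot increase. Since all height variables are frozen within the interval, this count is monotonically non-increasing throughout.

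Combining these facts: in any height-update-free interval, each node~$M$ can be used as the target of at most $k_M$ swaps, where $k_M$ is the number of children of~$M$ with the maximum height variable at the start of the interval; every targeting strictly decreases this count, and once it reaches zero, $M$ is unstable and can no longer be targeted. Summing over all nodes gives an $O(n^2)$ upper bound on the swaps per interval, and Lemma~\ref{lem:finiteheightupdates} provides finitely many intervals, so the total number of swaps is finite. The main obstacle will be the careful case analysis for the persistence property, especially establishing that no swap occurring at another node can create new tied maxima at~$M$ or restore the stability of an $M$ that has already been made unstable.
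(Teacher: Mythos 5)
Your proposal is essentially the same as the paper's proof: both rest on the potential function that counts, for each node, the children whose height variable equals the node's maximum, and both observe that a swap strictly decreases the target's count while leaving all other counts unchanged, so finitely many swaps can occur once heights are frozen. The paper streamlines the argument by invoking Lemma~\ref{lem:finiteheightupdates} once to obtain a time~$t_0$ after which heights never change, and then bounding the number of swaps after~$t_0$ by $\sum_u |\maxheightchildren{u}(t_0)|$; you instead bound the swaps in each height-update-free interval, which is a correct but slightly more elaborate decomposition yielding the same conclusion. Two minor remarks: since each node has a unique parent, at most one swap per step can target it, so the per-interval bound is $\sum_M k_M \leq n-1 = \calO(n)$ rather than $\calO(n^2)$; and the ``obstacle'' you flag at the end is not one — a swap elsewhere cannot alter $M$'s children set (only the source and target of a swap have their child sets modified), so with heights frozen $M$'s stability and its count of max-height children are exactly as monotone as you claim.
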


Lemmas~\ref{lem:finiteheightupdates} and~\ref{lem:finiteswaps} directly imply the following:
\begin{theorem}
 In any execution, the second phase is completed in a finite number of steps.
\end{theorem}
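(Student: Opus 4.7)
The plan is to observe that this theorem is essentially a direct corollary of the two lemmas that immediately precede it, together with the fact that the second phase is defined to start in a bad-node-free configuration and the structure of the algorithm (only two mutually exclusive actions). So the proof will be short.

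First I would invoke the definition of the second phase: by construction it begins in some configuration~$\conf{t_0}$ containing no bad nodes. By Lemma~\ref{cor:nobadnodes}, every subsequent configuration~$\conf{t}$ with $t \geq t_0$ also contains no bad nodes, so the hypothesis needed for Lemmas~\ref{lem:finiteheightupdates} and~\ref{lem:finiteswaps} is satisfied throughout the phase. Applying these two lemmas, the total number of height updates performed after~$t_0$ is finite and the total number of swaps performed after~$t_0$ is also finite.

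Next I would translate this bound on actions into a bound on steps. By inspection of the algorithm in Figure~\ref{alg:algo}, the only actions a node can execute are those guarded by \textbf{G1} (a height update) and \textbf{G2} (a swap), and these are mutually exclusive. By the definition of an execution step, at least one enabled node is activated in every step, so every step contributes at least one action — either a height update or a swap. Letting~$N_h$ and~$N_s$ denote the (finite) total numbers of height updates and swaps that can occur after~$t_0$, the total number of steps of the second phase is at most $N_h + N_s$, which is finite. Therefore the second phase reaches a configuration in which no node is enabled in a finite number of steps, completing the proof.

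The only subtlety worth flagging is that an execution step may activate several nodes simultaneously, so one must argue via actions rather than via nodes; but since each activation is counted in exactly one of~$N_h$ or~$N_s$, this causes no difficulty. There is no real obstacle here — the work has been done in Lemmas~\ref{cor:nobadnodes}, \ref{lem:finiteheightupdates}, and~\ref{lem:finiteswaps}, and this theorem is their immediate combination.
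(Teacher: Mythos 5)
Your proof is correct and matches the paper's approach: the paper derives the theorem as a direct corollary of Lemmas~\ref{lem:finiteheightupdates} and~\ref{lem:finiteswaps} (with Lemma~\ref{cor:nobadnodes} guaranteeing the no-bad-nodes hypothesis persists throughout the phase), exactly as you argue. You have simply spelled out the "each step performs at least one action" bookkeeping that the paper leaves implicit.
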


\subsubsection{First Phase}

For the sake of presentation,
it will be helpful to sometimes consider that the root of the tree has an imaginary father~$\mathfrak{r}$,
which is never enabled and always a bad node.
\begin{definition}[Extended configuration] \label{def:extconf}
 We denote by~$\confext{t}$ an auxiliary \emph{extended configuration} at time~$t$, which is identical
to~$\conf{t}$ except that the root node in~$\conf{t}$ has a new father node~$\mathfrak{r}$ with
$h_\mathfrak{r}(t)=-\infty$, for all~$t\geq 0$.
\end{definition}
The bad nodes induce a partition of the nodes of the extended configuration into components: each bad node belongs to a
different component, and each non-bad node belongs to the component that contains its nearest bad ancestor.
\begin{definition}[Partition into components] \label{def:components}
 For each bad node~$b$ in~$\confext{t}$, the component~$\mathcal{T}_b(t)$ is the maximal weakly connected
directed subgraph of~$\confext{t}$ that has $b$ at its root and contains no other bad nodes.
\end{definition}

A useful property of this partition is that it remains unchanged as long as the set
of bad nodes remains the same. Therefore, in any sequence of steps in which the set of bad nodes remains the same, each
component behaves similarly to a system that does not contain bad nodes. In particular,
Lemmas~\ref{lem:finiteheightupdates} and~\ref{lem:finiteswaps} imply that each component is stabilized in finite time,
and thus this sequence of steps cannot be infinite. For a complete substantiation of these claims, please refer to the
proof of the following lemma in Appendix~\ref{apdx:lem:componentsstabilize}.
\begin{lemma} \label{lem:componentsstabilize}
 There cannot be an infinite sequence of steps in which no bad node is activated or becomes non-bad.
\end{lemma}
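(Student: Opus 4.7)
I would argue by contradiction: assume an infinite sequence of steps in which no bad node is activated and no bad node becomes non-bad. The first task is to show that, in such a sequence, the set of bad nodes is in fact frozen, i.e., no non-bad node becomes bad either. The only mechanisms that can alter a node's status are a child's height update or a swap modifying its children set, so it suffices to examine these two cases. A height update via $\mathbf{G1}$ executed by a non-bad node~$v$ only lowers $h_v$ to $1+\max_w h_w$, so its parent's max can only drop and the parent cannot become bad. For a swap initiated at a source~$u$, inspecting the four affected nodes and using $h_{\mathrm{max}(u)}=1+h_{\mathrm{max}(\mathrm{max}(u))}$ (from $\mathrm{stable}(\mathrm{max}(u))$) together with $h_{\mathrm{min}(u)}\leq h_{\mathrm{max}(u)}-\alpha-1$ (from $\neg\mathrm{balanced}(u)$) shows that neither $u$ nor $\mathrm{max}(u)$ becomes bad; every node other than these four keeps the same children and thus the same status.

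Next I would verify that the partition of $\confext{t}$ into components $\{\mathcal{T}_b\}$ is preserved throughout the sequence. Since the set of bad nodes is frozen, it is enough to show that the nearest bad ancestor of every node is preserved. Height updates do not touch any parent pointer, and a swap only inserts or removes a single non-bad intermediary ($u$ or $\mathrm{max}(u)$) along the ancestor chain of the swap-in or swap-out subtrees, so nearest bad ancestors are unchanged.

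I would then analyse each component $\mathcal{T}_b$ separately. Inside $\mathcal{T}_b$, the root $b$ is a passive bad node of constant height, every other node of $\mathcal{T}_b$ is non-bad throughout the sequence, and any bad children of non-bad nodes of $\mathcal{T}_b$ (which root external components) are likewise passive with constant height. From the viewpoint of the active nodes of $\mathcal{T}_b$, the dynamics are then equivalent to executing the algorithm on a system whose active region contains no bad nodes, with $b$'s fixed height serving as a boundary condition at the top and the external bad children as fixed-height passive leaves at the bottom. Re-running the arguments behind Lemmas~\ref{lem:finiteheightupdates} and~\ref{lem:finiteswaps} on this reduced system yields a finite bound on the height updates and swaps occurring inside $\mathcal{T}_b$.

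Summing over the finitely many components bounds the total length of the sequence, contradicting the infiniteness assumption. The main obstacle will be the third step: I must confirm that the proofs of Lemmas~\ref{lem:finiteheightupdates} and~\ref{lem:finiteswaps} are structural enough to accommodate the boundary conditions imposed by $b$ and by the external bad children. I expect this adaptation to be routine, since those lemmas only require that the nodes at the boundary of the active region provide fixed height values, which is precisely what $b$ and the external bad children do here.
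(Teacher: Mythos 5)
Your proof follows essentially the same route as the paper: first argue that the set of bad nodes is frozen (since $A_b=\emptyset$ and, by the guard conditions, no swap or height update by a non-bad node can create a bad node), then that the partition into components $\{\mathcal{T}_b\}$ is preserved (the paper formalizes this as Lemma~\ref{lem:samebadset}, built on a subtree-preservation lemma and the swap-chain machinery, where you give a more direct ``non-bad intermediary on the ancestor chain'' argument), and finally apply Lemmas~\ref{lem:finiteheightupdates} and~\ref{lem:finiteswaps} to each component viewed as a bad-node-free tree with its root $b$ and any external bad children acting as passive, fixed-height boundary nodes. The paper leaves that last adaptation step to a footnote asserting it is routine, exactly as you do, so your level of rigor matches the published argument.
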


We associate a \emph{badness vector} with each configuration. This vector reflects the distribution of bad nodes in
the system and will serve to quantify a certain notion of progress toward the extinction of bad nodes. In
particular, we will prove that the badness vector decreases lexicographically in every step in which at least one bad
node is activated or becomes non-bad.

\begin{definition}[Badness vector] \label{def:badness}
 For $t\geq 0$, let $b_1,b_2,\dots,b_{|\tilde{B}(t)|}$ be an
ordering of the bad nodes in~$\confext{t}$ by non-decreasing number of bad nodes contained in the
path from~$\mathfrak{r}$ to~$b_i$, breaking ties arbitrarily. Note that $b_1\equiv\mathfrak{r}$.
We define the \emph{badness vector} at time~$t\geq 0$ to be the vector
$$\vec{b}(t)=(|\mathcal{T}_{b_1}(t)|,\dots,|\mathcal{T}_{b_{|\tilde{B}(t)|}}(t)|)\enspace,$$
where the size of a connected component is the number of nodes belonging to that component.
\end{definition}
\begin{figure}[tpb]
\centering
\includegraphics[scale=1]{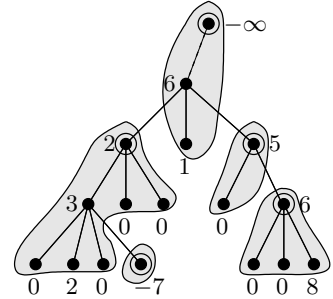}
\caption{An illustration of the notions introduced in
Definitions~\ref{def:badnodes}, \ref{def:extconf}, \ref{def:components}, and~\ref{def:badness}. Node labels indicate
their height values. Circled nodes represent bad nodes. The dashed
edge exists only in the extended configuration and connects the real root of the tree to the artificial
node~$\mathfrak{r}$. Each group of nodes is one component of the partition. The badness vector corresponding to this
configuration is~$(3,7,2,1,4)$.}
\label{fig:components}
\end{figure}
We refer the reader to Figure~\ref{fig:components} for an example of an extended configuration, its partition into
components, and the corresponding badness vector.

\begin{definition}[Lexicographic ordering]
 Consider two badness vectors $\vec{b}=(x_1,\dots,x_k)$ and~$\vec{b}'=(x'_1,\dots,x'_{k'})$.
 We say that~$\vec{b}'$ is
\emph{lexicographically smaller} than~$\vec{b}$ if one of the following holds:
\begin{enumerate}
 \item $k'<k$, or
 \item $k'=k$ and for some~$i$ in the range~$1\leq i\leq k$, $x'_i<x_i$ and $x'_j=x_j$ for all~$j<i$.
\end{enumerate}
\end{definition}

The proof of the following lemma can be found in Appendix~\ref{apdx:lem:badnesslexdecr}.
\begin{lemma} \label{lem:badnesslexdecr}
If at least one bad node is activated or becomes non-bad in step~$\conf{t}\longrightarrow\conf{t+1}$, then
$\vec{b}(t+1)$ is lexicographically smaller than~$\vec{b}(t)$.
\end{lemma}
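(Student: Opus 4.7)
The plan is to perform a case analysis on whether the total number of bad nodes in the extended configuration, $|\tilde{B}(t)|$, strictly decreases during the step $\conf{t}\longrightarrow\conf{t+1}$; as a preliminary I would argue that it cannot increase.

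To bound the change I would construct an injection $\phi$ from the set $B_{\text{new}}$ (nodes non-bad in $\confext{t}$ and bad in $\confext{t+1}$) into the set $B_{\text{gone}}$ (nodes bad in $\confext{t}$ and non-bad in $\confext{t+1}$). For any $p\in B_{\text{new}}$, some child $c_0$ of $p$ must have strictly increased its $h$ value during the step. The only way this can happen in one step is via a G1 activation of a node that was bad at time~$t$: swaps leave every height variable unchanged (they only rearrange edges), and a G1 activation of a non-bad node can only decrease its height. Thus $c_0$ was bad at $t$ and was activated. If $c_0$ remains bad at $t+1$, one of its children must likewise be a bad-at-$t$ activated node whose height strictly increased; iterating, we descend a chain $c_0, c_1,\dots, c_k$ that terminates at a bottom $c_k$ whose own update makes it non-bad. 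Setting $\phi(p):=c_k$ gives an injection because the chain from $c_k$ going upward is deterministic (stop at the first parent not in the activated-bad set), so $c_k$ uniquely determines $c_0$ and hence $p=father(c_0)$. Therefore $|B_{\text{new}}|\leq|B_{\text{gone}}|$ and $|\tilde{B}|$ is non-increasing.

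If $|\tilde{B}(t+1)|<|\tilde{B}(t)|$ then $k'<k$ and $\vec{b}(t+1)$ is lex smaller than $\vec{b}(t)$ by the first clause. Otherwise $|\tilde{B}|$ is conserved, $\phi$ is a bijection, and $B_{\text{gone}}$ is exactly the set of activated bad nodes that become non-bad. Pick a \emph{topmost} element $p^*$ of $B_{\text{new}}$ (minimizing the number of bad nodes on its path from $\mathfrak{r}$ in $\confext{t+1}$) and let $b^*$ be its nearest bad ancestor at time~$t$. The main structural claim is then: every bad-at-$t$ node $x$ whose path count at $t$ does not exceed that of $b^*$ remains bad at $t+1$ with the same path count, and has $\mathcal{T}_x(t+1)=\mathcal{T}_x(t)$ whenever $x\neq b^*$; whereas $\mathcal{T}_{b^*}(t+1)$ strictly loses $p^*$ and the portion of $p^*$'s former subtree that lay in $\mathcal{T}_{b^*}(t)$, since $p^*$ now forms a bad boundary of $b^*$'s component. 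The claim would be proved by contradiction: any violation of the structure would, through the chain description of $\phi$, force a $B_{\text{new}}$ element whose path count at $t+1$ is strictly smaller than $p^*$'s, contradicting the topmost choice. Choosing the arbitrary tie-breaking so that $b^*$ sits at the same position in $\vec{b}(t)$ and $\vec{b}(t+1)$ (right after all unchanged bads at its path-count level), the lex first difference falls exactly at $b^*$'s position and is a strict decrease, yielding $\vec{b}(t+1)<\vec{b}(t)$ by the second clause.

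The main obstacle is the verification of the structural claim in the conservation case, which needs a careful path-count bookkeeping to turn the topmost choice of $p^*$ into concrete constraints ruling out any bad-status change above $b^*$, such as a $B_{\text{gone}}$ ancestor or a $B_{\text{new}}$ node appearing within an ancestor's component.
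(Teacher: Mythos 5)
Your plan mirrors the paper's argument closely: the injection $\phi$ is essentially the paper's construction (descend from each newly bad node through a chain of activated bad nodes to a node that becomes non-bad), and injectivity rests on the same observation (the upward walk through activated bad nodes is forced by unique fatherhood and stops at the first node whose father is outside $A_b$). The conservation case likewise follows the paper's strategy of isolating a pivot component that strictly shrinks while everything ``above'' it is frozen.

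One step would fail as stated, though. Your structural claim asserts $\mathcal{T}_x(t+1)=\mathcal{T}_x(t)$ for \emph{every} bad-at-$t$ node $x\neq b^*$ with path count at most that of $b^*$. That is not true in general: several new bad nodes can appear simultaneously at the minimal path count (for instance, two disjoint subtrees each containing an $A_b$-node whose father becomes newly bad), so more than one component at $b^*$'s level may shrink in the same step. The contradiction you sketch (via $\phi$ producing a $B_{\text{new}}$ element with strictly smaller path count than $p^*$) only rules out changes at \emph{strictly smaller} path count; at $b^*$'s own level it cannot rule out ties, precisely because ``topmost'' is not unique. The fix is what your tie-breaking remark already hints at: claim only that all components at strictly smaller path count are preserved and that at least one component at $b^*$'s level (namely $b^*$'s) loses a node, then order the bad nodes at that level so that the unchanged ones come first. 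The paper avoids the issue by choosing the pivot $b_j$ entirely from the time-$t$ configuration (the first bad node, in the badness-vector order, whose component contains a father of an $A_b$-node), with no reference to path counts at $t+1$. Your $b^*$ ends up at the same level as $b_j$, and the remaining ``path-count bookkeeping'' you flag is indeed where the paper spends its technical lemmas (that no bad node is neutralized in the conservation case, that swap chains preserve subtrees, and that any node acquiring a child in $A_b$ must itself be in $A_b$ or become newly bad).
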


Lemmas~\ref{lem:componentsstabilize} and~\ref{lem:badnesslexdecr} are the essential ingredients required to prove 
convergence for the first phase. Appendix~\ref{apdx:thm:phaseoneconvergence} contains the full proof.
\begin{theorem} \label{thm:phaseoneconvergence}
 In any execution, the first phase is completed in a finite number of steps.
\end{theorem}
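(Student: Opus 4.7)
I would argue by contradiction, using the fact that the badness vector takes values in a finite well-ordered set.

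Assume the first phase is infinite. By definition the system contains bad nodes at every time~$t$ during this phase. Lemma~\ref{lem:componentsstabilize} forbids an infinite sequence of steps in which neither a bad node is activated nor a node becomes non-bad; hence there exist infinitely many ``progress'' time steps~$t_1<t_2<\dots$ at which at least one of these two events occurs. By Lemma~\ref{lem:badnesslexdecr}, each such step strictly lowers the badness vector: $\vec{b}(t_k+1)<\vec{b}(t_k)$ in the lexicographic order for every~$k$.

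Next I would observe that the badness vector lives in a finite set: its length and its entries are both bounded by~$n+1$, where $n$ is the number of real nodes, so lexicographic ordering on this set is well-founded. If the badness vector never increased between progress steps, the infinitely many strict decreases provided by Lemma~\ref{lem:badnesslexdecr} would at once yield a contradiction.

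The main obstacle is therefore to show that the badness vector never increases between consecutive progress steps. I would establish this by a short case analysis on the actions performed during a non-progress step. A height update on a non-bad node leaves the set of bad nodes and the tree topology unchanged, so it preserves the badness vector. A swap executed by a non-bad source~$u$ requires both~$u$ and~$\mathrm{max}(u)$ to be stable, so in particular both are non-bad and belong to the same component; the swap then only rearranges subtrees within this single component, leaving its node set, and every other component, intact. Consequently, $\vec{b}$ is invariant across non-progress steps, the sequence $\vec{b}(t_1)>\vec{b}(t_2)>\dots$ of badness vectors just before progress steps is an infinite strictly decreasing chain in a finite well-ordered set, and we obtain the desired contradiction. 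Hence the first phase terminates in finitely many steps.
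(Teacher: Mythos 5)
Your plan follows the paper's argument closely: combine Lemma~\ref{lem:componentsstabilize} (no infinite sequence of non-progress steps), Lemma~\ref{lem:badnesslexdecr} (each progress step strictly decreases the badness vector lexicographically), the claim that non-progress steps leave the badness vector unchanged, and well-foundedness of the lexicographic order on the bounded set of badness vectors. The one place where your sketch is thinner than the paper is the claim that non-progress steps preserve the badness vector. The paper establishes this by citing Lemmas~\ref{lem:nonact}, \ref{lem:nonbad}, and~\ref{lem:swap} (when no bad node is activated, the set of bad nodes does not change) together with Lemma~\ref{lem:samebadset} (when the bad-node set is unchanged, every component's node set is preserved), the latter resting on Lemma~\ref{lem:subtreepreserve} and the swap-chain machinery. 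Your per-action case analysis argues about one height update or one swap at a time, but under the distributed daemon an execution step may activate many nodes simultaneously, and concurrent swaps can chain (Lemma~\ref{lem:swapchainprop}); in that regime a swap does more than \emph{``rearrange subtrees within a single component''}—subtrees rooted at swap-outs and swap-ins may themselves straddle several components, and it is Lemma~\ref{lem:subtreepreserve} that guarantees each bad node's subtree is preserved, hence the partition. Since those lemmas are already in place, you can simply invoke Lemma~\ref{lem:samebadset} instead of re-deriving it; as written, your justification for the swap case would need to be expanded to handle concurrent and chained swaps.
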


\section{Simulation}
\label{sec:simulation}

To investigate the dynamic behavior and properties of our algorithm, we implemented a round based simulator.
Each simulation ($i$) builds a full binary tree, ($ii$) initializes heights, and ($iii$) runs the balancing protocol.
We used a synchronous daemon to run simulations, i.e., all the enabled nodes of the system execute their enabled actions
simultaneously.
The impact of non-deterministic daemons and/or the weaker daemon that is required to run our algorithm will be tackled
in future work.

In the following, for a given simulation we will denote by~$n$ the number nodes of the tree, by~$h_i$ its initial height
(i.e.\ after generation), by~$h_f$ its final height (i.e.\ after balancing) and by~$t$ the execution time in rounds.

\subsection{Almost Linear Trees}
\label{linear-tree-case}

Intuitively, almost linear trees are stressing for a balancing algorithm because they are ``as unbalanced as possible''.
In the following, 
for a given~$n$, the initial tree is the structurally unique full binary tree of height $\lceil n/2\rceil$.
The only unspecified part of the simulation is the initial height values.
It turns out that this has practically a very small impact on termination time.
For each $n$ we ran thousands of simulations starting from the corresponding linear tree.
For a given $n$ they always converged in the same number of rounds.
As a consequence, all runs starting from the same linear tree will have approximately the same results.

\begin{figure}[tpb]
	\centering
	\begin{minipage}{.48\textwidth}
		\centering
		\includegraphics[angle=-90,scale=0.45]{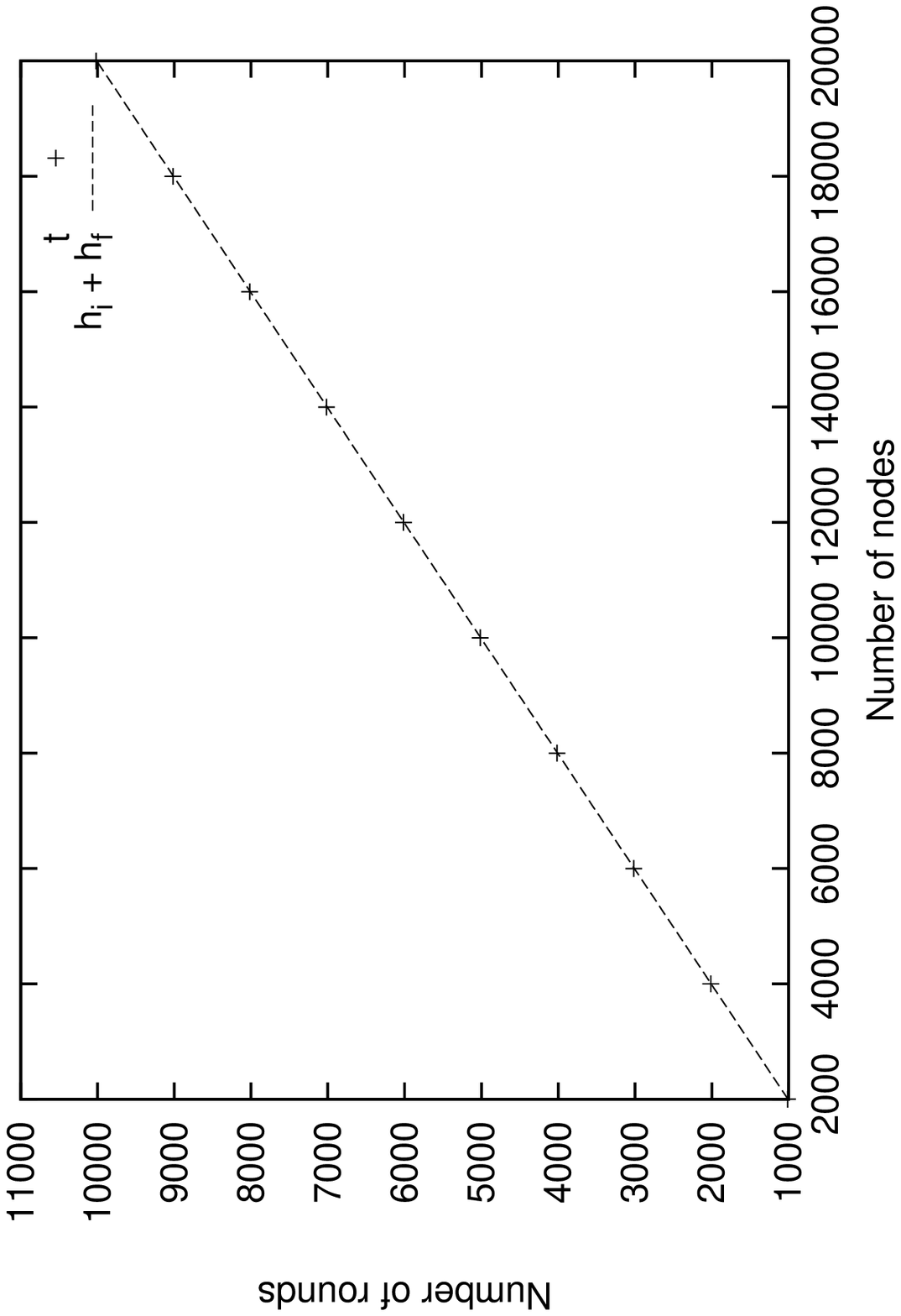}
		\caption{\label{fig:round-linear}$t$ for almost linear trees}	
	\end{minipage}
	\begin{minipage}{.48\textwidth}
		\centering
		\includegraphics[angle=-90,scale=0.45]{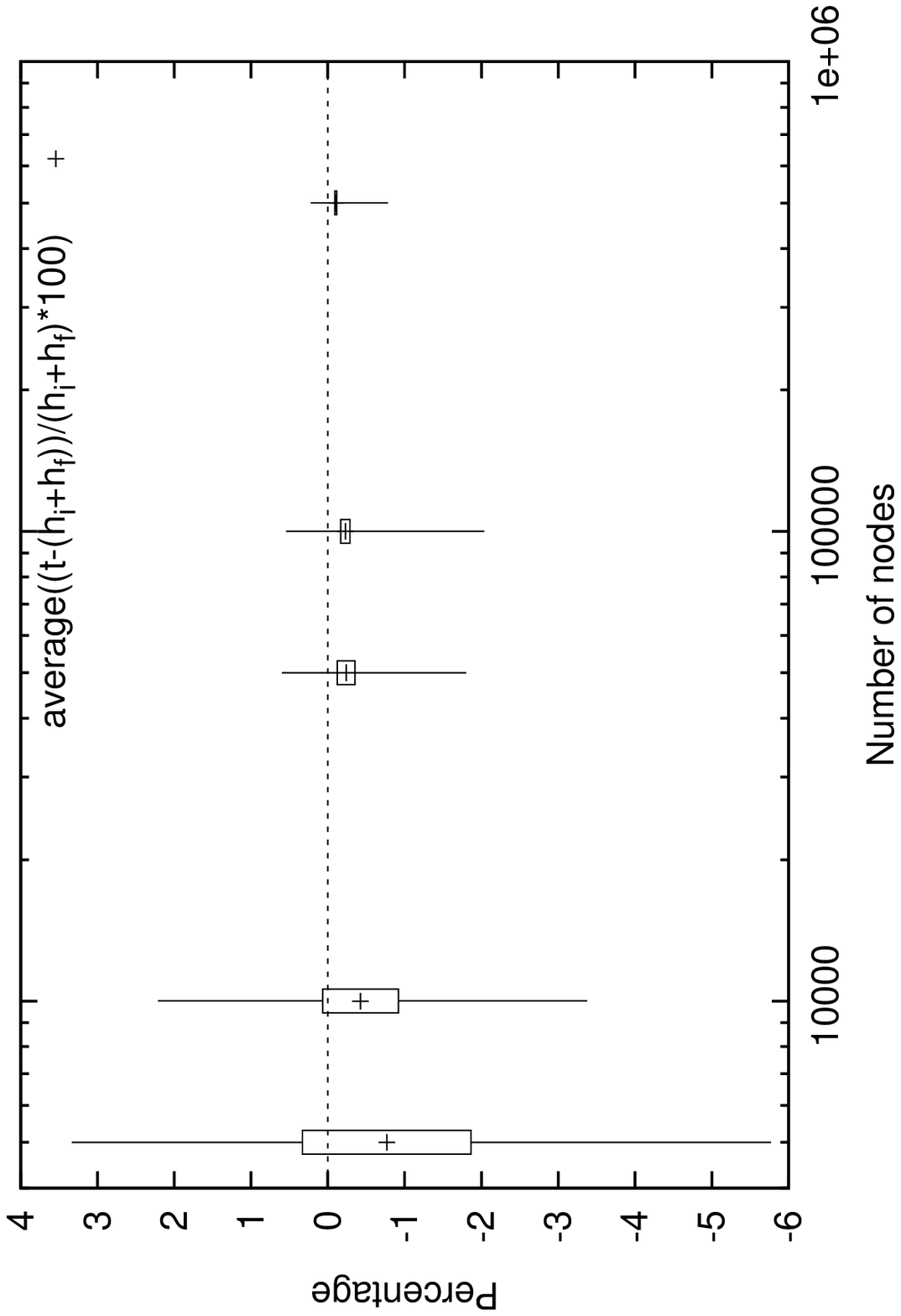}
		\caption{\label{fig:round-random}$t$ for random trees}
	\end{minipage}
\end{figure}

Figure~\ref{fig:round-linear} shows the termination time of simulations for different numbers of nodes.
It contains two curves: the first one is plotted from our experiment and
each point stands for an almost linear tree.
The second curve is the sum of initial and final tree heights.

This plot tends to show that the round complexity of our algorithm is $\calO(n)$ in the worst case.

\subsection{Random Trees}

To showcase the applicability of our algorithm to an existing system,
trees are generated using the join protocol of~\cite{PexpanderVtree}.

Figure~\ref{fig:round-random} shows the distance between the sum of initial and final tree heights and experimental
termination times.
The vertical axis gives the average variation between $h_i+h_f$ and experimental termination times. 
For each $n$ we ran thousands of simulations.
Each candlestick sums up statistics on those runs; the whiskers indicates minimum and maximum variation, the
cross indicates the average variation and the box height indicates the standard deviation. 

The greater $n$ is, the closer $h_i+h_f$ and experimental results are.
This result indicates that the average round complexity of our algorithm is $\calO(h_i+h_f)$.

\section{Concluding Remarks}
\label{sec:conclusion}

In this work, we propose a new distributed self-stabilizing algorithm to rebalance containment-based trees
using edge swapping. Simulation results indicate that the algorithm is quite efficient in terms of round complexity; in
fact, it seems that we can reasonably expect $\calO(n)$ to be a worst-case bound, whereas in the average case the
running time is closer to $\calO(h_i+h_f)$ rounds. Interestingly, this average-case bound also appears
in a different setting in~\cite{DLPT}. Note that the conjectured average-case bound is close to $\calO(\log n)$ in a
practically relevant scenario in which some faults appear (or new nodes are inserted) in an already balanced tree.

We have assumed that nodes keep correct copies of the height values of their children, so that each node can read the
height values of its grandchildren by looking at the memory of its children. For simplicity, we have not dealt with the
extra synchronization that would be required to maintain these copies up-to-date, but it should be possible to achieve
this with a constant overhead per execution step. Furthermore, we have assumed that internal nodes have degree at
least~two. Degenerate internal nodes with degree~one could be accommodated by a bottom-up protocol that runs in parallel
and essentially disconnects them from the tree, attaching their children to their parents. Finally, note that in
Section~\ref{sec:primitive} we remarked that edge swaps may have semantic impact if they rearrange nodes so as to
violate the containment relation. This can also be fixed by another bottom-up protocol that restores each node's label
to the minimum that suffices to contain the labels of its children.

Possible directions for future work include establishing the conjectured upper bounds of $\calO(n)$ and
$\calO(h_i+h_f)$ for the round complexity.
We already have some preliminary results in this direction: the first phase
of the algorithm (refer to 
Subsection~\ref{subsec:proof}) is indeed concluded in~$\calO(n)$ rounds.
An extension of this work would be to adapt the proposed algorithm in the message passing model.



\bibliographystyle{IEEEtran}
\bibliography{main}

\begin{thebibliography}{10}
\providecommand{\url}[1]{#1}
\csname url@samestyle\endcsname
\providecommand{\newblock}{\relax}
\providecommand{\bibinfo}[2]{#2}
\providecommand{\BIBentrySTDinterwordspacing}{\spaceskip=0pt\relax}
\providecommand{\BIBentryALTinterwordstretchfactor}{4}
\providecommand{\BIBentryALTinterwordspacing}{\spaceskip=\fontdimen2\font plus
\BIBentryALTinterwordstretchfactor\fontdimen3\font minus
  \fontdimen4\font\relax}
\providecommand{\BIBforeignlanguage}[2]{{%
\expandafter\ifx\csname l@#1\endcsname\relax
\typeout{** WARNING: IEEEtran.bst: No hyphenation pattern has been}%
\typeout{** loaded for the language `#1'. Using the pattern for}%
\typeout{** the default language instead.}%
\else
\language=\csname l@#1\endcsname
\fi
#2}}
\providecommand{\BIBdecl}{\relax}
\BIBdecl

\bibitem{Btree}
R.~Bayer and E.~McCreight, ``Organization and maintenance of large ordered
  indices,'' in \emph{Acta Informatica 1}, 1972, pp. 173--189.

\bibitem{Rtree}
A.~Guttman, ``R-trees: a dynamic index structure for spatial searching,'' in
  \emph{{ACM SIGMOD}}, 1984, pp. 47--57.

\bibitem{Mtree}
P.~Ciaccia, M.~Patella, and P.~Zezula, ``M-tree: An efficient access method for
  similarity search in metric spaces,'' in \emph{VLDB}, 1997, pp. 426--435.

\bibitem{DRtree}
S.~Bianchi, A.~K. Datta, P.~Felber, and M.~Gradinariu, ``Stabilizing
  peer-to-peer spatial filters,'' in \emph{{ICDCS 2007}}, 2007, p.~27.

\bibitem{VBI}
H.~V. Jagadish, B.~C. Ooi, Q.~H. Vu, R.~Zhang, and A.~Zhou, ``Vbi-tree: A
  peer-to-peer framework for supporting multi-dimensional indexing schemes,''
  in \emph{ICDE}, 2006, p.~34.

\bibitem{PexpanderVtree}
T.~Izumi, M.~G. Potop-Butucaru, and M.~Valero, ``Physical expander in virtual
  tree overlay,'' in \emph{DISC}, 2011, pp. 82--96.

\bibitem{D74}
W.~Edsger and Dijkstra., ``Self-stabilizing systems in spite of distributed
  control,'' \emph{Commum. ACM}, vol.~17, no.~11, pp. 643--644, 1974.

\bibitem{SDRtree}
C.~du~Mouza, W.~Litwin, and P.~Rigaux, ``{SD-R}tree: A scalable distributed
  rtree,'' \emph{Data Engineering, International Conference on}, vol.~0, pp.
  296--305, 2007.

\bibitem{AVL}
Adelson-Velskii and Landis, ``An algorithm for the organization of
  information,'' in \emph{Proceedings of the USSR Academy of Sciences}, 1962.

\bibitem{BATON}
H.~V. Jagadish, B.~C. Ooi, and Q.~H. Vu, ``Baton: a balanced tree structure for
  peer-to-peer networks,'' in \emph{VLDB}, 2005, pp. 661--672.

\bibitem{HM01a}
T.~Herman and T.~Masuzawa, ``Available stabilizing heaps,'' \emph{Information
  Processing Letters}, vol.~77, pp. 115--121, 2001.

\bibitem{BDV05}
D.~Bein, A.~Datta, and V.~Villain, ``Snap-stabilizing optimal binary search
  tree,'' in \emph{Seventh International Symposium on Self-Stabilizing Systems
  (SSS'05)}.\hskip 1em plus 0.5em minus 0.4em\relax Barcelona, Spain: LNCS
  3764, 2005, pp. 1--17.

\bibitem{DLPT}
E.~Caron, F.~Desprez, F.~Petit, and C.~Tedeschi, ``Snap-stabilizing prefix tree
  for peer-to-peer systems,'' \emph{Parallel Processing Letters}, vol.~20,
  no.~1, pp. 15--30, March 2010.

\bibitem{DBLP:journals/tc/LuccioP76}
F.~Luccio and L.~Pagli, ``On the height of height-balanced trees,'' \emph{IEEE
  Trans. Computers}, vol.~25, no.~1, pp. 87--91, 1976.

\bibitem{BST}
T.~H. Cormen, C.~E. Leiserson, R.~L. Rivest, and C.~Stein, \emph{Introduction
  to Algorithms, Second Edition}.\hskip 1em plus 0.5em minus 0.4em\relax The
  MIT Press, 2001.

\bibitem{BGM89}
J.~Burns, M.~Gouda, and R.~Miller, ``On relaxing interleaving assumptions,'' in
  \emph{Proceedings of the MCC Workshop on Self-Stabilizing Systems, MCC
  Technical Report No. STP-379-89}, 1989.

\end{thebibliography}


\appendix

\subsection{Proof of Proposition~\ref{lem:stablestate}} \label{apdx:lem:stablestate}

 We prove that each node~$u$ is balanced and has correct height information by induction on the actual height of~$u$
in~$\conf{t^\star}$.

If $h^\star_u(t^\star)=0$, then $u$ is a leaf and since it is not enabled, $h_u(t^\star)=0$. Therefore, $u$ has correct
height information and is trivially balanced. Assume that all nodes of actual height~$k$ or less, where $k\geq 0$, are
balanced and have correct height information in $\conf{t^\star}$. Consider a node~$u$ at actual height~$k+1$. Since $u$
is not enabled for a height update, $h_u(t^\star)=\max_{v\in\children{u}(t^\star)} h_v(t^\star) +1$. But all children
of~$u$ are at actual height~$k$ or less, therefore the inductive hypothesis and the last equality imply:
$h_u(t^\star)= \max_{v\in\children{u}(t^\star)} h^\star_v(t^\star) +1 = h^\star_u(t^\star)$. Moreover, $u$ is not
enabled for a swap, which means that for all children~$v$ of~$u$, $h_v(t^\star)\geq h_u(t^\star)-1-\alpha$, and because
they have correct height information, $h^\star_v(t^\star)\geq h^\star_u(t^\star)-1-\alpha$. Therefore, $u$ has correct
height information and is balanced.

\subsection{Proof of Lemma~\ref{cor:nobadnodes}} \label{apdx:cor:nobadnodes}

We will use the following notation in this section and in Appendices~\ref{apdx:lem:componentsstabilize}
and~\ref{apdx:lem:badnesslexdecr}. Let $P=\mathcal{C}\longrightarrow_A\mathcal{C}'$ be a step of the execution. The
set~$A$
is partitioned into subsets~$A_h$, $A_b$, and~$A_s$, where $A_h$ is the set of non-bad nodes which perform a height
update, $A_b$ is the set of bad nodes which perform a height update, and $A_s$ is the set of nodes which are sources of
a swap. Let $A_t$ be the set of nodes which are targets of a swap and let~$B$
and~$B'$ denote the set of bad nodes in configuration~$\mathcal{C}$ and~$\mathcal{C}'$, respectively. For each
node~$u$, let~$\children{u}$ be the set of children of~$u$ in~$\mathcal{C}$ and~$\subtree{u}$ be the set of nodes of
the subtree rooted at~$u$ in~$\mathcal{C}$, and let~$\children{u}'$ and~${\subtree{u}}'$ be the corresponding sets 
in~$\mathcal{C}'$. Finally, let $\mathcal{G}(A_b)$ be the subgraph induced by~$A_b$ in the
configuration~$\mathcal{C}$.

The following lemma states some very basic properties that are easily derived from the definitions. It will be used
implicitly throughout the proofs. We state it without proof.
\begin{lemma}[Easy properties]
 \begin{enumerate}
  \item $A_b\subseteq B$.
  \item $A_s\cap A_h=A_s\cap B=A_t\cap A_h=A_t\cap B=\emptyset$.
  \item For all nodes~$u$, $\children{u}'=\children{u}$ if and only if $u\not\in A_s\cup A_t$.
  \item The set of leaves in~$\mathcal{C}'$ is equal to the set of leaves in~$\mathcal{C}$.
 \end{enumerate}
\end{lemma}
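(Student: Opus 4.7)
The plan is straightforward: each of the four items follows directly from unpacking the definitions of $A_h$, $A_b$, $A_s$, $A_t$, $B$, the guards \textbf{G1} and \textbf{G2}, and the effect of a $\mathrm{swap}$ on the edge set. No inductive or invariant argument is required, so I would present the proof as a short enumeration.

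For item~1, $A_b$ is defined as the set of bad nodes that perform a height update during $P$, while $B$ is the set of all bad nodes in the configuration $\mathcal{C}$ at the start of $P$; containment is immediate. For item~2, I would argue from the mutual exclusion of the guards: a node in $A_s$ satisfies \textbf{G2}, which requires $\mathrm{stable}(u)$, while a node in $A_h$ satisfies \textbf{G1}, which requires $\neg\mathrm{stable}(u)$, so $A_s\cap A_h=\emptyset$. Moreover, $\mathrm{stable}(u)$ forces $h_u=1+\max_{w\in\children{u}} h_w>\max_{w\in\children{u}} h_w$, so $u$ is not bad, giving $A_s\cap B=\emptyset$. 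For a node $t\in A_t$, we have $t=\mathrm{max}(u)$ for some source $u\in A_s$, and the second clause of \textbf{G2} explicitly requires $\mathrm{stable}(\mathrm{max}(u))=\mathrm{stable}(t)$; the same two observations then give $A_t\cap A_h=\emptyset$ and $A_t\cap B=\emptyset$.

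For item~3, I would observe that only the swap action (action~2) can alter the edge set, since action~1 only updates a scalar height variable. By the definition of $\mathrm{swap}((u,\mathrm{min}(u)),(\mathrm{max}(u),\mathrm{max}(\mathrm{max}(u))))$, the only nodes whose outgoing edges are modified are the source $u$ and the target $\mathrm{max}(u)$, so $u\notin A_s\cup A_t$ implies $\children{u}'=\children{u}$. Conversely, if $u$ is a source, it loses $\mathrm{min}(u)$ and gains $\mathrm{max}(\mathrm{max}(u))$, which is a grandchild of $u$ in $\mathcal{C}$ and hence not previously a child; if $u$ is a target, it loses $\mathrm{max}(\mathrm{max}(u))$ and gains $\mathrm{min}(u)$, a sibling of $u$ in $\mathcal{C}$ and hence not previously one of its children. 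The tree structure of $\mathcal{C}$ rules out coincidences here, and since concurrent swaps cannot conflict (as already noted in Section~\ref{sec:primitive}), this local reasoning is unaffected by the other swaps happening in the same step.

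Item~4 is then an immediate corollary of item~3: a $\mathrm{swap}$ adds no vertices and removes none, and by inspection the swap-in and swap-out nodes keep all of their children (only their \emph{parents} change), while the source and target continue to have children (they exchange one child each). Hence a node is childless in $\mathcal{C}'$ iff it was childless in $\mathcal{C}$. The only step that requires any care is item~3, where one must verify that the children sets of the source and target genuinely change, i.e.\ that the swap-in (resp.\ swap-out) node is not already a child of the source (resp.\ target); this is the one place where the tree property of $\mathcal{C}$ is used.
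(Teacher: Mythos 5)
The paper states this lemma explicitly without proof, describing it as ``very basic properties that are easily derived from the definitions,'' and your definition-unpacking argument is precisely that routine verification; it is correct. You also rightly isolate the only point requiring any care, namely that the converse direction of item~3 uses the tree structure of $\mathcal{C}$ to ensure the swap-in (a grandchild of the source) and the swap-out (a sibling of the target) were not already children of the source and target, respectively.
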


\begin{lemma} \label{lem:badleaves}
 Let~$\mathcal{E}$ be a weakly connected component of~$\mathcal{G}(A_b)$. In~$\mathcal{C}'$, the nodes of~$\mathcal{E}$
still induce a weakly connected subgraph which is identical to~$\mathcal{E}$. No leaf of~$\mathcal{E}$ belongs to $B'$.
\end{lemma}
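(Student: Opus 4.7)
The plan is to establish both assertions using one basic observation drawn immediately from the Easy Properties: since $A_b \subseteq B$ and $A_s \cap B = A_t \cap B = \emptyset$, no node in $A_b$ can be the source or target of any swap in step $P$. Consequently, every node of $A_b$ has identical children in $\mathcal{C}$ and $\mathcal{C}'$.

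For the structural part of the lemma, I would next observe that a node $u \in A_b$ can have its parent changed in $P$ only by being the swap-out $\min(w)$ or the swap-in $\max(\max(w))$ of some swap with source~$w$; in both cases the old and new parents of $u$ lie in $\{w, \max(w)\} \subseteq A_s \cup A_t$, and thus outside $A_b$. Combining this with the first observation: an edge $(p, u)$ of $\mathcal{E}$ must persist in $\mathcal{C}'$, since otherwise the old parent $p$ of $u$ would lie in $A_s \cup A_t$, contradicting $p \in A_b$; and conversely no new edge between two nodes of $A_b$ can appear, since the new parent of any moved node in $A_b$ lies outside $A_b$. This yields the desired identity of the subgraph induced by $\mathcal{E}$'s nodes.

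For the claim on leaves, let $u$ be a leaf of $\mathcal{E}$. If $u$ is a tree leaf, the height update sets $h_u(t+1) = 0$, so $u \not\in B'$ trivially. Otherwise $u$ is internal, $h_u(t+1) = 1 + \max_{v \in \children{u}(t)} h_v(t)$, and $\children{u}$ is unchanged, so it suffices to show $h_v(t+1) \leq h_v(t)$ for every child $v$ of $u$. By the leaf property $v \not\in A_b$, and a short case analysis on $v$'s role in $P$ (inactive, or $v \in A_h$, $A_s$, $A_t$, or a swap-out/in of some other swap) shows that $h_v$ either stays the same or strictly decreases, the decrease occurring precisely in the $v \in A_h$ case where $v$ non-bad combined with $\neg\mathrm{stable}(v)$ forces $h_v(t) \geq 2 + \max_{w \in \children{v}(t)} h_w(t)$. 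I expect the main subtlety here to lie not in any single deep step but in the careful bookkeeping that exhaustively enumerates the roles $v$ can play in $P$, a check that is greatly simplified by the fact that $u$ itself is excluded from all swap roles.
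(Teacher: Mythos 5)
Your proof is correct and follows essentially the same approach as the paper's: both arguments hinge on the observation that $A_b\cap(A_s\cup A_t)=\emptyset$, so children sets of nodes in $\mathcal{E}$ are preserved (which alone settles the structural claim, making your detour through parent changes redundant but harmless), and on the fact that every child of a leaf of $\mathcal{E}$ is outside $A_b$ and hence cannot increase its height variable, so the height update at that leaf produces a strictly dominating value.
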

\begin{proof}
No node of~$\mathcal{E}$ is in~$A_s\cup A_t$, therefore $\children{v}'=\children{v}$ for all nodes~$v$
of~$\mathcal{E}$. This suffices to prove that the nodes of~$\mathcal{E}$ induce in~$\mathcal{C}'$ a weakly connected
subgraph that is identical to~$\mathcal{E}$.

Now, let~$u$ be a leaf of~$\mathcal{E}$. If $u$ is also a leaf in~$\mathcal{C}$, then its activation has set its height
variable to~$0$ in~$\mathcal{C}'$. But $u$ is still a leaf in~$\mathcal{C}'$, therefore it now has the correct
height value and therefore $u\not\in B'$.
If $u$ is not a leaf of~$\mathcal{C}$, it means that
$\children{u}$  is non-empty and
$\children{u}\cap A_b=\emptyset$ (if $v\in\children{u}\cap A_b$, then $v$ would also be in~$\mathcal{E}$ and $u$ would
not
be a leaf of~$\mathcal{E}$). We deduce that each node in~$\children{u}$ either
decreases or retains its height
variable in~$\mathcal{C}'$. Moreover, by definition of the height update action, node~$u$ adjusts its height variable to
at least one greater than any of the values of the height variables of its children in~$\mathcal{C}$. We conclude that
$u\not\in B'$. 
\end{proof}

\begin{lemma} \label{lem:nonact}
If $u\not\in A\cup B$ and $\children{u}'\cap A_b=\emptyset$, then $u\not\in B'$.
\end{lemma}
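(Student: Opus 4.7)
The plan is to show $u\not\in B'$ by tracking how $u$'s height and the heights of its (possibly updated) children evolve during the step. The first observation is that since $u\not\in A$, its own height is preserved: $h_u(t+1)=h_u(t)$. Moreover, since $u\not\in B$, every old child $v\in\children{u}$ already satisfies $h_v(t)<h_u(t)$ (or $u$ is a leaf with $h_u(t)\geq 0$, in which case $u$ remains a non-bad leaf trivially whenever $\children{u}'$ stays empty).

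The natural case split is on whether $u$ is the target of a concurrent swap performed by its parent. If $u\not\in A_t$, then $u$ is not involved in any swap (it is not in $A_s$ either), so $\children{u}'=\children{u}$. For each such child $v$, the hypothesis excludes $v\in A_b$; the remaining possibilities for $v$ (performing G1 as a non-bad node, being a swap source, being a swap-in or swap-out, or being inactive) each either strictly decrease $h_v$ or leave it unchanged. In particular $v$ cannot be a swap target, since that would require its parent $u$ to swap. Thus $h_v(t+1)\leq h_v(t)<h_u(t+1)$ for every $v\in\children{u}'$, and $u\not\in B'$.

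If instead $u\in A_t$, then $u=\mathrm{max}(u_p)$ for some source $u_p\in A_s$, and the new child set is $\children{u}'=(\children{u}\setminus\{\mathrm{max}(u)\})\cup\{\mathrm{min}(u_p)\}$. The surviving old children are handled exactly as above. For the new child $\mathrm{min}(u_p)$ I would extract a strict inequality from the swap guard of $u_p$: $\mathrm{stable}(u_p)$ combined with $\neg\mathrm{balanced}(u_p)$ and $u=\mathrm{max}(u_p)$ forces $h_{\mathrm{min}(u_p)}(t)\leq h_u(t)-\alpha-1<h_u(t)$. The hypothesis then rules out $\mathrm{min}(u_p)\in A_b$ (since $\mathrm{min}(u_p)\in\children{u}'$), and the same analysis as in the previous case gives $h_{\mathrm{min}(u_p)}(t+1)\leq h_{\mathrm{min}(u_p)}(t)<h_u(t+1)$. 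Crucially, $\mathrm{min}(u_p)$ cannot be a swap target either, because its only possible source-parent $u_p$ is already performing the swap that targets $u\neq\mathrm{min}(u_p)$ and concurrent swaps do not conflict.

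The main obstacle I expect is controlling the genuinely new child $\mathrm{min}(u_p)$ in the second case: one has to squeeze the strict inequality $h_u-h_{\mathrm{min}(u_p)}>\alpha$ out of the stability plus imbalance part of G2, and then carefully exclude every concurrent action that might lift $h_{\mathrm{min}(u_p)}$ above $h_u$. Everything else reduces to the uniform principle that if $v$ is non-bad at time $t$ and $v\not\in A_b$, then $h_v$ is non-increasing during the step, combined with the fact that edges adjacent to $u$ can only be rewired by a swap performed by $u$ or by $u$'s parent.
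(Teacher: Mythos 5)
Your proof is correct and follows essentially the same route as the paper's: the same case split on whether $u\in A_t$, the same observation that children not in $A_b$ can only retain or decrease their height values, and the same extraction of $h_{\mathrm{min}(u_p)}(t)<h_u(t)$ from the swap guard of $u_p$. The only excess effort is in ruling out $\mathrm{min}(u_p)$ being a swap target, which is unnecessary since being a swap target never changes a node's own height variable --- only a height update does.
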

\begin{proof}
Since $u$ is not activated, the value of its height variable remains the same in~$\mathcal{C}'$. 
Assume, first, that~$u\not\in A_t$. Then, $\children{u}'=\children{u}$ (since $u\not\in A_s$, either).
Moreover, $\children{u}'\cap A_b=\emptyset$, therefore
all nodes in~$\children{u}=\children{u}'$ either decreased or retained their height variables. From these observations
and the fact that $u\not\in B$, we conclude that $u\not\in B'$.

Now, assume that $u\in A_t$, and let~$x\not\in A_b$ be its new child in~$\mathcal{C}'$. By the definition of the swap
guard, the original height variable of node~$x$ is strictly
smaller than the height variable of~$u$, and it may have decreased even further if $x$ was simultaneously activated for
a height update. All other chilren of~$u$ in~$\mathcal{C}'$ were also children of~$u$ in~$\mathcal{C}$, and, by the
fact that $\children{u}'\cap A_b=\emptyset$, we know that they either decreased or retained their height values.
From these observations and the fact that $u\not\in B$, we have that $u\not\in B'$. 
\end{proof}

\begin{lemma} \label{lem:nonbad}
If $u\in A_h$ and $\children{u}'\cap A_b=\emptyset$, then $u\not\in B'$.
\end{lemma}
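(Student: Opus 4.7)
The plan is to show that the set of children of $u$ does not change during the step, that no such child sees its height variable increase, and that $u$'s own height update then places $h_u(t+1)$ strictly above the maximum height value of its children, so that $u\notin B'$.

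First I would argue that $\children{u}'=\children{u}$. Because $u\in A_h$, the guard $\neg\mathrm{stable}(u)$ holds, so by the mutual exclusivity of \textbf{G1} and \textbf{G2} we have $u\notin A_s$. Moreover $u\notin A_t$: if some source $w$ had $\mathrm{max}(w)=u$, then the swap guard at $w$ would require $\mathrm{stable}(u)$, contradicting that $u$ is enabled for a height update. Property~3 of the Easy Properties lemma then yields $\children{u}'=\children{u}$; in particular $u$ is a leaf in $\mathcal{C}'$ iff it is a leaf in $\mathcal{C}$.

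Next I would show that $h_v(t+1)\leq h_v(t)$ for every $v\in\children{u}=\children{u}'$. By the hypothesis of the lemma, $v\notin A_b$. The same kind of reasoning as for $u$ eliminates the remaining ways $v$ could be displaced by a concurrent swap: $v\notin A_t$ (the source would have to be $u$), $v$ is not a swap-out (same reason), and $v$ is not a swap-in either, since the original parent of a swap-in is $\mathrm{max}(w)$ for the source $w$, so if $\mathrm{max}(w)=u$ then $v$ would be reparented to $w$, contradicting $\children{u}'=\children{u}$. Hence $v$ is inactive, belongs to $A_s$, or belongs to $A_h$. In the first two cases $h_v$ is not modified; in the last case $v$ is non-bad and unstable, so $h_v(t)\geq \max_{w\in\children{v}} h_w(t)+2$ and its update strictly decreases $h_v$.

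Finally, $u$ is itself non-bad and unstable, so if $u$ is not a leaf its update gives
\[
h_u(t+1)\;=\;1+\max_{w\in\children{u}} h_w(t)\;>\;\max_{v\in\children{u}'} h_v(t+1),
\]
which is exactly the negation of $u\in B'$. If $u$ is a leaf in $\mathcal{C}$, then it remains a leaf in $\mathcal{C}'$ and its update sets $h_u(t+1)=0$, so again $u\notin B'$. The only delicate part is the case analysis in the second paragraph, where one has to check exhaustively that no mechanism available to a concurrent swap elsewhere in the tree can raise a child's height value; in each subcase the obstruction is that $u$ itself is excluded from $A_s\cup A_t$.
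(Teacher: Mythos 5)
Your proof is correct and follows the same structure as the paper's: first establish $\children{u}'=\children{u}$ from $u\notin A_s\cup A_t$ (this is Easy Property~3, with $u\notin A_s\cup A_t$ given directly by Easy Property~2), then observe that no child's height value increases, and finally conclude from the height-update rule that $h_u(t+1) > \max_{v\in\children{u}'} h_v(t+1)$. The second paragraph is more elaborate than necessary: since a node's height variable changes only when that node itself executes~\textbf{S1} (swaps modify edges, not heights), you do not need to rule out $v$ being a swap-in or swap-out; the set equality $\children{u}'=\children{u}$ together with $\children{u}\cap A_b=\emptyset$ already yields that each child either keeps its height (if not in $A_h$) or decreases it (if in $A_h$, hence non-bad and unstable), exactly as in the paper's shorter argument.
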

\begin{proof}
Since $u\in A_h$, we know that $u\not\in A_s\cup A_t$ and therefore $\children{u}'=\children{u}$. Since
$\children{u}'\cap A_b=\emptyset$, all its children either retain or decrease their height variable, whereas the height
variable of~$u$ is adjusted to at least one greater than any of the original values of the height variables of its
children. Thus, $u\not\in B'$.  
\end{proof}

\begin{lemma} \label{lem:swap}
If $u\in A_s$ and $\children{u}'\cap A_b=\emptyset$, then $u\not\in B'$.
\end{lemma}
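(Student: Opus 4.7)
The plan is to show directly that $h_u(\mathcal{C}')>\max_{w\in\children{u}'}h_w(\mathcal{C}')$, which by Definition~\ref{def:badnodes} rules out $u\in B'$. Two observations will drive the argument: the swap primitive modifies only edges, not height variables, so $h_u(\mathcal{C}')=h_u(\mathcal{C})$; and the hypothesis $\children{u}'\cap A_b=\emptyset$ is exactly what prevents any new child of $u$ from raising its height during this step.

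First I would unpack guard~\textbf{G2} at $u$ in $\mathcal{C}$. The conjunct $\mathrm{stable}(u)$ gives $h_u(\mathcal{C})=1+\max_{w\in\children{u}}h_w(\mathcal{C})$, so every old child $w$ satisfies $h_w(\mathcal{C})\leq h_u(\mathcal{C})-1$. Applying this bound to $w=\mathrm{max}(u)$ and combining it with $\mathrm{stable}(\mathrm{max}(u))$, which yields $h_{\mathrm{max}(u)}(\mathcal{C})=1+h_{\mathrm{max}(\mathrm{max}(u))}(\mathcal{C})$, produces the sharper estimate $h_{\mathrm{max}(\mathrm{max}(u))}(\mathcal{C})\leq h_u(\mathcal{C})-2$. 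The set $\children{u}'$ is then, by the swap semantics, $(\children{u}\setminus\{\mathrm{min}(u)\})\cup\{\mathrm{max}(\mathrm{max}(u))\}$.

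Next, for each $v\in\children{u}'$ I would show $h_v(\mathcal{C}')\leq h_v(\mathcal{C})$ via a short case analysis on the subset of $A$ containing $v$. Since $v\notin A_b$ by hypothesis, the remaining cases are: $v\in A_h$, where $v$ is non-bad and non-stable, so the height update strictly decreases $h_v$; $v\in A_s\cup A_t$, where the swap leaves $h_v$ untouched; and $v\notin A$, where $h_v$ trivially does not change. Together with the bounds above and with $h_u(\mathcal{C}')=h_u(\mathcal{C})$, this gives $\max_{w\in\children{u}'}h_w(\mathcal{C}')\leq h_u(\mathcal{C})-1<h_u(\mathcal{C}')$, whence $u\notin B'$. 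The most delicate case is the swap-in child $\mathrm{max}(\mathrm{max}(u))$, whose height variable could a priori be affected by a concurrent action; but since swaps never write any height variable and the hypothesis excludes its membership in $A_b$, the same dichotomy closes the argument uniformly with the other new children.
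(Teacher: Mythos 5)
Your argument follows the same route as the paper's (which is just terser): bound the new child's height via $u$'s swap guard, and observe that no $v\in\children{u}'$ can raise its height variable since $\children{u}'\cap A_b=\emptyset$, yielding $\max_{w\in\children{u}'}h_w(\mathcal{C}')<h_u(\mathcal{C}')$.

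One imprecision is worth flagging. The identity $\children{u}'=(\children{u}\setminus\{\mathrm{min}(u)\})\cup\{\mathrm{max}(\mathrm{max}(u))\}$ holds only when $u\not\in A_t$. Nothing forbids $u\in A_s\cap A_t$: the scheduler may, in the same step, also activate $u$'s father $p$ for a swap in which $u=\mathrm{max}(p)$, so that $u$ additionally loses $\mathrm{max}(u)$ and acquires $\mathrm{min}(p)$ as a second new child. This does not actually break your argument: $\mathrm{min}(p)\in\children{u}'$ is excluded from $A_b$ by hypothesis, and the conjunct $\neg\mathrm{balanced}(p)$ in $p$'s guard already gives $h_{\mathrm{min}(p)}(\mathcal{C})\leq h_u(\mathcal{C})-\alpha-1\leq h_u(\mathcal{C})-2$, so your dichotomy handles it exactly as it handles $\mathrm{max}(\mathrm{max}(u))$. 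The paper's own one-line proof speaks of ``the new child'' in the singular and asserts that ``all other children of $u$ were also children of $u$ in $\mathcal{C}$'', so it makes the same silent simplification; for a fully airtight write-up, though, you should either treat the $u\in A_s\cap A_t$ possibility explicitly or phrase the bound so that it applies uniformly to every new child introduced by any swap touching $u$.
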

\begin{proof}
 Let~$x\not\in A_b$ be the new child of~$u$ in~$\mathcal{C}'$. By definition of the
swap guard, we know that the height variable of~$x$ in~$\mathcal{C}'$ is strictly smaller than that of~$u$. All
other children of~$u$ were also children of~$u$ in~$\mathcal{C}$ and they have either retained or decreased their height
variable, therefore $u\not\in B'$.  
\end{proof}

\begin{lemma} \label{lem:newbadnodes}
 $|B'\setminus B|\leq |A_b\setminus B'|$.
\end{lemma}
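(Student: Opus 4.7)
I aim to construct an injection $\phi:B'\setminus B\hookrightarrow A_b\setminus B'$, which immediately yields the inequality. The plan is to leverage the component structure of $\mathcal{G}(A_b)$, building $\phi$ in three short steps.

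First, I would observe that every $u\in B'\setminus B$ has $\children{u}'\cap A_b\neq\emptyset$. Since $u\notin B$ rules out $u\in A_b$ (as $A_b\subseteq B$), the node~$u$ falls into one of the cases $u\notin A$, $u\in A_h$, or $u\in A_s$, and the contrapositives of Lemmas~\ref{lem:nonact}, \ref{lem:nonbad}, and~\ref{lem:swap} respectively deliver such a child.

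Second, letting $v_u$ be any element of $\children{u}'\cap A_b$, I would verify that $v_u$ is the root of its weakly connected component~$\mathcal{E}_u$ in~$\mathcal{G}(A_b)$, i.e., that the parent of~$v_u$ in~$\mathcal{C}$ is not in~$A_b$. A brief case analysis on how~$v_u$ became a child of~$u$ in~$\mathcal{C}'$ gives three possibilities for this parent: $u$ itself (if~$v_u$'s parent did not change), the target $\max(u)$ of a swap performed by~$u$ (if $u\in A_s$ and $v_u$ is the swap-in), or the source~$p$ of a swap in which~$u$ was the target (if $u\in A_t$ and $v_u$ was moved into~$u$ by~$p$'s swap). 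None of these lies in~$A_b$, since $u\notin A_b$ and $A_b\cap(A_s\cup A_t)=\emptyset$. Because each $v_u$ has a unique parent in~$\mathcal{C}'$, distinct $u$'s yield distinct $v_u$'s and hence pairwise disjoint components~$\mathcal{E}_u$.

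Finally, I would set $\phi(u)=\ell_u$ for any leaf~$\ell_u$ of~$\mathcal{E}_u$, which exists since each component is a nonempty finite subtree. Lemma~\ref{lem:badleaves} gives $\ell_u\in A_b\setminus B'$, and component-disjointness gives injectivity of~$\phi$. The main subtlety---and the reason for not setting $\phi(u)=v_u$ directly---is that the root~$v_u$ of the component may itself remain bad when deeper nodes of~$\mathcal{E}_u$ also increased their heights, landing~$v_u$ in $A_b\cap B'$ instead of $A_b\setminus B'$; descending to a leaf of~$\mathcal{E}_u$ circumvents this via Lemma~\ref{lem:badleaves}, while the component-disjointness established in the previous step preserves injectivity.
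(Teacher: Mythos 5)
Your proposal is correct and takes essentially the same approach as the paper's proof: find a child $v_u\in A_b$ of $u$ in $\mathcal{C}'$ via the contrapositives of Lemmas~\ref{lem:nonact}, \ref{lem:nonbad}, and~\ref{lem:swap}, show that $v_u$ is the root of a component of $\mathcal{G}(A_b)$, map $u$ to a leaf of that component (which lies in $A_b\setminus B'$ by Lemma~\ref{lem:badleaves}), and deduce injectivity from the components being pairwise distinct. The only cosmetic difference is in how you establish that $v_u$ is a root of its component---you run a direct case analysis on the possible parents of $v_u$ in $\mathcal{C}$ (namely $u$, $\max(u)$, or the source of a swap targeting $u$, none in $A_b$), whereas the paper argues by contradiction through Lemma~\ref{lem:badleaves}.
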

\begin{proof}
To each node~$x\in B'\setminus B$, we associate a node~$y\in A_b\setminus B'$ as follows:
By Lemmas~\ref{lem:nonact},~\ref{lem:nonbad}, and~\ref{lem:swap}, a non-bad node~$x$ in~$\mathcal{C}$ can be turned
into a bad node in~$\mathcal{C}'$ only if, in $\mathcal{C}'$, it is the father of some node~$b\in A_b$. In fact, this
node~$b$
must be the root of some weakly connected component~$\mathcal{E}$ of~$\mathcal{G}(A_b)$: Otherwise, $b$ would have a
father $u\in A_b$ in~$\mathcal{C}$, and by Lemma~\ref{lem:badleaves}, $u$ would still be the father of~$b$
in~$\mathcal{C}'$, thus $u\equiv x$. This contradicts with the fact that $x\not\in B$.
We define~$y$ to be any leaf of~$\mathcal{E}$, which, by Lemma~\ref{lem:badleaves}, became non-bad in~$\mathcal{C}'$.

To conclude the argument, note that two distinct nodes $x,x'\in B'\setminus B$ must be the fathers of the roots of two
distinct components of~$\mathcal{G}(A_b)$, and therefore they are associated to two distinct nodes~$y$ and~$y'$. 
\end{proof}

\begin{lemma} \label{lem:badnodenotincr}
$|B'|\leq |B|$.
\end{lemma}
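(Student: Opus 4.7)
The plan is to derive this lemma as a short set-theoretic consequence of Lemma~\ref{lem:newbadnodes} and the easy property that $A_b \subseteq B$. The intuition is that any newly-created bad nodes (nodes in $B' \setminus B$) must be ``paid for'' by bad nodes that have just become non-bad, and we will show that the latter set is large enough to cover the former.

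First, I would decompose $|B'|$ and $|B|$ using the common intersection $B \cap B'$, writing
\begin{equation*}
|B'| = |B \cap B'| + |B' \setminus B|, \qquad |B| = |B \cap B'| + |B \setminus B'|.
\end{equation*}
Hence the desired inequality $|B'| \leq |B|$ is equivalent to $|B' \setminus B| \leq |B \setminus B'|$.

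Next, I would invoke Lemma~\ref{lem:newbadnodes} to bound the left-hand side: $|B' \setminus B| \leq |A_b \setminus B'|$. Then, using the easy property that $A_b \subseteq B$, I observe that $A_b \setminus B' \subseteq B \setminus B'$, and therefore $|A_b \setminus B'| \leq |B \setminus B'|$. Chaining these inequalities yields $|B' \setminus B| \leq |B \setminus B'|$, which is exactly what we needed.

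There is no real obstacle here: Lemma~\ref{lem:newbadnodes} has already done the hard combinatorial work of matching each newly-bad node in $\mathcal{C}'$ to a distinct node of $A_b$ that has become non-bad in $\mathcal{C}'$, and the containment $A_b \subseteq B$ is listed among the ``Easy properties''. The present lemma is essentially the bookkeeping step that packages these facts into the monotonicity statement about $|B|$.
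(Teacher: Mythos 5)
Your proof is correct and matches the paper's argument essentially verbatim: the paper also uses the identity $|B'|-|B|=|B'\setminus B|-|B\setminus B'|$, applies Lemma~\ref{lem:newbadnodes} and the containment $A_b\subseteq B$, and chains the inequalities in the same way.
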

\begin{proof}
One can easily verify that, for any sets~$B$ and~$B'$, $|B'|-|B|=|B'\setminus
B|-|B\setminus B'|$. Since $A_b\subseteq B$, we have that $|A_b\setminus B'|\leq|B\setminus B'|$, therefore
$|B'|-|B| \leq |B'\setminus B|-|A_b\setminus B'|$. This, combined with Lemma~\ref{lem:newbadnodes}, yields $|B'|\leq
|B|$. 
\end{proof}

Lemma~\ref{cor:nobadnodes} follows immediately from Lemma~\ref{lem:badnodenotincr}.

\subsection{Missing Proofs from Subsection~\ref{sec:secondphase}} \label{apdx:sec:secondphase}

\begin{lemma} \label{lem:creationbadnode}
 For any node~$u$, if $h_u(t)<h^\star_u(t)$, then there exists in~$\conf{t}$ at least one bad node in the subtree rooted
at~$u$.
\end{lemma}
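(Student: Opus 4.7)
The plan is to prove the lemma by induction on the actual height $h^\star_u(t)$ of node~$u$ in the configuration $\conf{t}$. The intuition is that if a node's stored height underestimates its actual height, then either the node itself is bad (because its height value cannot exceed the maximum of its children's stored heights plus one), or else this underestimation must be ``inherited'' from a child whose stored height also underestimates its actual height, so we can recurse.

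For the base case, if $h^\star_u(t)=0$, then $u$ is a leaf, and $h_u(t)<0$ makes $u$ itself a bad node by Definition~\ref{def:badnodes}. For the inductive step, assume the claim for all nodes of actual height strictly less than~$k$ and consider a node~$u$ with $h^\star_u(t)=k\geq 1$ and $h_u(t)<h^\star_u(t)$. Let $v^\star\in\children{u}(t)$ be a child of~$u$ attaining the maximum actual height, so that $h^\star_{v^\star}(t)=k-1$. I would then split into two cases: if $u$ is itself bad, there is nothing to prove. Otherwise, by definition of bad node, $h_u(t)>\max_{v\in\children{u}(t)} h_v(t)\geq h_{v^\star}(t)$. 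Using integrality of height values together with $h_u(t)<h^\star_u(t)=h^\star_{v^\star}(t)+1$, this forces $h_u(t)\leq h^\star_{v^\star}(t)$, and hence $h_{v^\star}(t)<h_u(t)\leq h^\star_{v^\star}(t)$, so $v^\star$ satisfies the hypothesis of the induction. Applying the inductive hypothesis to~$v^\star$ yields a bad node in the subtree rooted at~$v^\star$, which is also contained in the subtree rooted at~$u$.

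The argument is fairly mechanical once one spots that integrality of the stored heights is what rules out the degenerate situation $h_{v^\star}(t)=h^\star_{v^\star}(t)$ in the non-bad case; this is the only subtle point. The other thing to be careful about is keeping the notation consistent between ``internal node'' and ``leaf'' in the bad-node definition, so that the base case handles leaves separately and the inductive step only invokes the ``internal node'' clause on non-leaves (which is automatic since in the inductive step $u$ has a child~$v^\star$ and is thus internal).
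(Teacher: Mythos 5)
Your proof is correct and follows essentially the same approach as the paper: induction on actual height, with the base case handling leaves and the inductive step picking a child of maximal actual height and transferring the underestimate to it via the assumption that~$u$ is not bad together with integrality. The arithmetic is presented in a slightly rearranged order but is the same argument.
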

\begin{proof}
 By induction on the actual height of~$u$. If $h^\star_u(t)=0$, then $u$ is a leaf and, by assumption,
$h_u(t)<h^\star_u(t)=0$.
Therefore, $u$ is a bad node. Now, assume that the statement holds for all nodes with actual height at most~$k$,
where~$k\geq 0$. Consider a node~$u$ with~$h^\star_u(t)=k+1$ and let $v$ be one of its children with actual
height~$h^\star_v(t)=k$ (at least one such child must exist). We can assume that~$u$ is not bad, otherwise the claim is
proved. Since $u$ is not bad, $h_v(t)\leq h_u(t)-1$. By assumption, $h_u(t)\leq h^\star_u(t)-1$, thus we get
$h_v(t)\leq h^\star_u(t)-2=k-1$. Since the actual height of~$v$ is~$k$, we have $h_v(t)<h^\star_v(t)$ and, by the
inductive hypothesis, there exists a bad node in the subtree rooted at~$v$. 
\end{proof}

\begin{lemma} \label{lem:heightupdateenabledpersists}
 In the second phase, if a node becomes enabled for a height update, it will remain enabled for a height update at least
until it is activated.
\end{lemma}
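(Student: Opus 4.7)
The plan is to assume $u$ is enabled for a height update at time $t$ in the second phase and is not activated in the step $\conf{t}\longrightarrow\conf{t+1}$, and to show that $u$ is still enabled for a height update in $\conf{t+1}$. The key observation, driving everything, is that in the second phase Lemma~\ref{cor:nobadnodes} rules out bad nodes, so if~$u$ is not stable then it must strictly exceed its children's heights by more than one. Concretely, for $u$ internal, being non-bad gives $h_u(t) > \max_{v\in\children{u}(t)} h_v(t)$ and being non-stable gives $h_u(t) \neq 1 + \max_{v\in\children{u}(t)} h_v(t)$, so $h_u(t) \geq 2 + \max_{v\in\children{u}(t)} h_v(t)$; the leaf case reduces to $h_u(t)\geq 1$.

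Next, I would argue that the set $\children{u}$ does not change during step $t$. The only way $\children{u}$ can be modified is if $u$ is a source or target of some swap in step~$t$. But guard~\textbf{G2} demands $\mathrm{stable}(p) \wedge \mathrm{stable}(\mathrm{max}(p))$ for the source~$p$ and its target $\mathrm{max}(p)$, and since $u$ is not stable, $u$ can play neither role. (As source, $u$ would also be activated, which is excluded by assumption.) For the same reason, in the leaf case, $u$ cannot be a target and hence remains a leaf in $\conf{t+1}$.

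Then I would show that $\max_{v\in\children{u}} h_v$ cannot increase between $t$ and~$t+1$. A child $v$'s variable $h_v$ is modified only when $v$ is activated for a height update; activations of~$v$ as source, target, swap-in, or swap-out of a swap leave $h_v$ untouched. Moreover, because $v$ is itself non-bad in the second phase, the height-update action yields $h_v(t+1) = 1 + \max_{w\in\children{v}(t)} h_w(t) \leq h_v(t) - 1$, i.e.\ the height strictly decreases. Since $u$ is not activated, $h_u(t+1) = h_u(t)$, and combining these facts gives $h_u(t+1) \geq 2 + \max_{v\in\children{u}(t+1)} h_v(t+1) > 1 + \max_{v\in\children{u}(t+1)} h_v(t+1)$, so $u$ is still non-stable at time $t+1$ and therefore still enabled for a height update.

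The main subtlety, and what the argument really turns on, is ruling out that some concurrent neighbor activity silently makes $u$ stable: the persistence of $\children{u}$ requires the mutual exclusion between ``enabled for height update'' and ``eligible to be a swap source or target'', while the monotone-decrease of children's height variables requires the absence of bad nodes granted by Lemma~\ref{cor:nobadnodes}. Once these two ingredients are in place, the conclusion follows immediately in both the internal-node and leaf cases.
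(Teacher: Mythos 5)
Your proof is correct and follows the same route as the paper's: establish that $u$'s children set cannot change (because the swap guard requires stability of both source and target, while $u$ is unstable), and then that the children's height values cannot increase (because no node is bad in the second phase, so height updates only decrease heights). The only difference is that you spell out explicitly the inequality $h_u(t)\geq 2+\max_{v\in\children{u}(t)}h_v(t)$ that the no-bad-node hypothesis provides, and which is what actually rules out $u$ silently becoming stable when a child's height decreases; the paper leaves that quantitative step implicit, so your version is somewhat more careful on that point but not a different argument.
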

\begin{proof}
 Note that a node that is enabled for a height update cannot be the source or the target of a swap, therefore its set
of children does not change while it is enabled for a height update. Moreover, its children cannot increase their own
height values, so the node will remain enabled for a height update at least until its activation. 
\end{proof}

\begin{proof}[Proof of Lemma~\ref{lem:finiteheightupdates}]
Consider an execution of the algorithm in which an infinite number of height updates are executed. Since the number of
nodes is finite, at least one node must execute a height update an infinite number of times. By the fact that the
initial configuration contains no bad nodes and by Lemma~\ref{cor:nobadnodes}, each time that node executes a
height update, its height variable decreases. At some point, its height variable will become negative and at that
point, by Lemma~\ref{lem:creationbadnode}, some node in its subtree will become bad. This contradicts with
Lemma~\ref{cor:nobadnodes}. 
\end{proof}

\begin{proof}[Proof of Lemma~\ref{lem:finiteswaps}]
By Lemma~\ref{lem:finiteheightupdates}, there exists a finite time~$t_0$ after which no height updates are performed.
For each
node~$u$, let $h_u$ denote the value of its height variable at time~$t_0$ and, since it remains constant thereafter, at
all subsequent times. Furthermore, for $t\geq t_0$, let $\maxheightchildren{u}(t)$ denote the set of children of~$u$
at time~$t$ whose height variable is equal to $h_u-1$.

Note that $u$ is enabled for a height update at time~$t$ if and
only if $|\maxheightchildren{u}(t)|=0$.
We observe now that in every step, if $u$ is the target of a swap then $|\maxheightchildren{u}(t)|$ is decreased
by~$1$, otherwise it remains the same. By Lemma~\ref{lem:heightupdateenabledpersists}, if $|\maxheightchildren{u}(t)|$
becomes~$0$ then it remains equal to~$0$ until $u$ performs a height update.

Suppose, now, for the sake of contradiction, that an infinite number of swaps are performed after time~$t_0$. For each
swap, there exists a node that is the target of that swap. It follows, then, that after at most $\sum_{u}
|\maxheightchildren{u}(t_0)|$ swaps have been performed after time~$t_0$, all nodes in the system will be either idle
or enabled for a height update (idle nodes will include nodes that are so low in the tree that they cannot possibly be
the source of a swap and the root of the tree which will not be able to perform a swap since all of its children will
be enabled for a height update). At that point, either all nodes are idle and thus the execution is completed, which
contradicts with the fact that an infinite number of swaps are performed after time~$t_0$, or the only choice of the
scheduler is to activate a node for a height update, which contradicts with the fact that no height updates are
performed after time~$t_0$. 
\end{proof}

\subsection{Proof of Lemma~\ref{lem:componentsstabilize}} \label{apdx:lem:componentsstabilize}

In this section, we use some notation introduced in Appendix~\ref{apdx:cor:nobadnodes}. Additionally, in this section
and in Appendix~\ref{apdx:lem:badnesslexdecr}, we will use the following notation.
Let~$\tilde{\mathcal{C}}$ and~$\tilde{\mathcal{C}}'$ denote the extended configurations
corresponding to~$\mathcal{C}$ and~$\mathcal{C}'$, and let~$\tilde{B}$ and~$\tilde{B}'$ denote the corresponding sets
of bad nodes. Moreover, let~$\{\mathcal{T}_b\}_{b\in B}$ and~$\{\mathcal{T}'_b\}_{b\in B'}$ be the partition of nodes
into components for the two configurations, and if~$\mathcal{T}$ is any such component, let~$V(\mathcal{T})$ denote its
set of nodes.

\begin{definition}[Swap chain]
 A \emph{swap chain} in step~$P$ is a maximal-length directed path $u_0,\dots,u_\sigma$ ($\sigma\geq 1$) in
configuration~$\mathcal{C}$ such that $u_0,\dots,u_{\sigma-1} \in A_s$, $u_1,\dots,u_\sigma\in A_t$, and, if
$\sigma\geq 2$, $u_2,\dots,u_\sigma$ are the swap-ins of the swaps performed by nodes~$u_0,\dots,u_{\sigma-2}$,
respectively.
\end{definition}

\begin{lemma}[Properties of swap chains] \label{lem:swapchainprop}
 Let $u_0,\dots,u_\sigma$ be a swap chain in step~$P$.
\begin{enumerate}
 \item \label{lem:swapchainprop:1} $u_1\in\children{u_0}'$.
 \item \label{lem:swapchainprop:2} In~$\mathcal{C}'$, the nodes of even order in the swap chain ($u_0,u_2,\dots$) induce
a directed path starting
from~$u_0$. Similarly, the nodes of odd order in the swap chain ($u_1,u_3,\dots$) induce a directed path starting
from~$u_1$.
 \item \label{lem:swapchainprop:3} $\bigcup_{i=0}^\sigma \children{u_i}' = \bigcup_{i=0}^\sigma \children{u_i}$.
\end{enumerate}
\end{lemma}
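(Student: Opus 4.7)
The plan is to track directly the edge additions and removals produced by each swap in the chain, then read off the three claims, relying on the chain's maximality to exclude external interference.

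Setup. By the chain definition, for each $0\leq i\leq \sigma-1$ the swap performed by~$u_i$ has target $u_{i+1}=\mathrm{max}(u_i)$, swap-out $v_i\coloneqq\mathrm{min}(u_i)$, and swap-in equal to $u_{i+2}$ when $i\leq \sigma-2$ or to an external node $w=\mathrm{max}(u_\sigma)$ when $i=\sigma-1$. Thus this swap removes the edges $(u_i,v_i)$ and $(u_{i+1},u_{i+2})$ (with $w$ replacing $u_{i+2}$ at the right endpoint) and inserts $(u_i,u_{i+2})$ and $(u_{i+1},v_i)$. Maximality of the chain precludes any other source touching a chain node: an outside source with $u_0$ as target would allow prepending, $u_\sigma\in A_s$ would allow appending, and since each node in a tree has a unique parent, the child set of a chain node is acted on by at most one source.

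For Part~1, the only swap modifying $\children{u_0}$ is $u_0$'s own, and the only edge outgoing from $u_0$ that it removes is $(u_0,v_0)$. Since $u_0$ is enabled for a swap it is not balanced, hence $v_0=\mathrm{min}(u_0)\neq\mathrm{max}(u_0)=u_1$, so $u_1\in\children{u_0}'$. For Part~2, the swap at $u_i$ for every $0\leq i\leq \sigma-2$ installs the new edge $(u_i,u_{i+2})$; since this edge is not present in $\mathcal{C}$, no other swap can remove it, so it survives to $\mathcal{C}'$ (the paper guarantees that concurrent swaps do not conflict). Restricting to the even-indexed members yields the directed path $u_0\to u_2\to u_4\to\cdots$ in $\mathcal{C}'$, and restricting to the odd-indexed members yields $u_1\to u_3\to u_5\to\cdots$.

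For Part~3, each swap $u_i$ displaces exactly two children: $v_i$ leaves $\children{u_i}$ and enters $\children{u_{i+1}}'$, while the swap-in (either $u_{i+2}$ or $w$) leaves $\children{u_{i+1}}$ and enters $\children{u_i}'$. Both endpoints of every such transfer are chain nodes, so every element that leaves $\bigcup_{i=0}^\sigma\children{u_i}$ reappears in $\bigcup_{i=0}^\sigma\children{u_i}'$, and conversely, giving the claimed equality. I expect the main obstacle to be the case-analysis bookkeeping: carefully handling the degenerate case $\sigma=1$, the endpoints $u_0$ and $u_\sigma$ separately from the interior chain members, and in particular tracking the external swap-in $w$, which lies in $\children{u_\sigma}$ before the step and in $\children{u_{\sigma-1}}'$ after, so that it contributes equally to both sides of the equation in Part~3.
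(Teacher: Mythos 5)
Your proof is correct and follows essentially the same route as the paper's own (which is very terse): Part~1 from the fact that $u_0$ is not a swap target and its own swap removes only the swap-out $\mathrm{min}(u_0)\neq u_1$, Part~2 from the observation that the swap-in of $u_i$'s swap is $u_{i+2}$, and Part~3 from the fact that children are only exchanged between chain members. You add welcome explicit detail that the paper leaves implicit --- in particular the maximality argument ruling out an outside source touching a chain node, and the careful bookkeeping of where the external swap-in $w=\mathrm{max}(u_\sigma)$ lands --- but there is no change of method.
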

\begin{proof}
 Property~\ref{lem:swapchainprop:1} follows immediately from the fact that $u_0$ is the first node in the swap chain,
therefore it is not the target of any swap operation. Property~\ref{lem:swapchainprop:2} follows from the fact that
$u_2,u_4,\dots$ are the swap-in nodes for the swaps performed by nodes $u_0,u_2,\dots$ respectively and, similarly,
$u_3,u_5,\dots$ are the swap-in nodes for the swaps performed by nodes $u_1,u_3,\dots$ respectively. For
Property~\ref{lem:swapchainprop:3}, note that the nodes in the swap chain exchange children only with other nodes in
the swap chain. 
\end{proof}

\begin{lemma} \label{lem:subtreepreserve}
 If $u\not\in A_t$, then ${\subtree{u}}'=\subtree{u}$.
\end{lemma}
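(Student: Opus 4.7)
The plan is to prove both inclusions separately, leveraging the explicit description of how a swap modifies edges: each swap with source~$s$ removes exactly the edges $(s,\min(s))$ and $(\max(s),\max(\max(s)))$ and inserts exactly $(s,\max(\max(s)))$ and $(\max(s),\min(s))$.

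For $\subtree{u}' \subseteq \subtree{u}$, I would show that $\subtree{u}$ is closed under the parent-to-child relation in~$\mathcal{C}'$. Take any $v \in \subtree{u}$ and any $w \in \children{v}'$. The edge $(v,w)$ is either inherited from~$\mathcal{C}$ (in which case $w \in \subtree{v} \subseteq \subtree{u}$) or newly introduced by some swap. New edges of the form $(s,\max(\max(s)))$ send a grandchild of~$s$ to become a child of~$s$, so $w$ was already in $\subtree{s} \subseteq \subtree{u}$. New edges of the form $(\max(s),\min(s))$ are the critical case: here $v = \max(s) \in \subtree{u}$, and the assumption $u \not\in A_t$ forces $\max(s) \neq u$, so the parent $s$ of $\max(s)$ in~$\mathcal{C}$ lies in $\subtree{u}$, which puts $w = \min(s) \in \subtree{s} \subseteq \subtree{u}$. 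Starting from $u \in \subtree{u}$ and iterating this closure gives $\subtree{u}' \subseteq \subtree{u}$.

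For the reverse inclusion $\subtree{u} \subseteq \subtree{u}'$, I would induct on the depth~$d$ of $v$ in $\subtree{u}$ measured in~$\mathcal{C}$. The base $v = u$ is immediate. For $d \geq 1$, set $p = p(v)$ in $\mathcal{C}$; by induction $p \in \subtree{u}'$. If $p'(v) = p$, we are done. Otherwise $v$ is either a swap-out or a swap-in of some swap with source~$s$. In the swap-out case $v = \min(s)$, we have $s = p \in \subtree{u}'$, and the edge $(s,\max(s))$ survives in~$\mathcal{C}'$ (it is not among the removed edges), so $p'(v) = \max(s) \in \subtree{u}'$ and therefore $v \in \subtree{u}'$. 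In the swap-in case $v = \max(\max(s))$, we have $p = \max(s)$; the hypothesis $u \not\in A_t$ rules out $p = u$, so $d \geq 2$ and $s = p(p)$ has depth $d-2$ in $\subtree{u}$, hence by induction $s = p'(v) \in \subtree{u}'$.

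The main obstacle will be making sure the case analysis is exhaustive and that the hypothesis $u \not\in A_t$ is deployed exactly at the two spots where it is really needed: ruling out $\max(s) = u$ in the closure argument, and ruling out $d = 1$ in the swap-in subcase of the induction. A minor technicality is that the parent function $p'$ is well defined, i.e.\ no node has its parent rewritten by two different swaps in the same step; this follows because an unbalanced source must have distinct minimum- and maximum-height children, so swap-outs and swap-ins across the concurrent swaps in a step affect pairwise distinct nodes.
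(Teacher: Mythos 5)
Your first inclusion (the closure argument for $\subtree{u}'\subseteq\subtree{u}$) is sound, and the observation that $u\not\in A_t$ rules out $\max(s)=u$ is used in exactly the right place. The second inclusion, however, contains a gap in the swap-out subcase. You assert that the edge $(s,\max(s))$ ``survives in $\mathcal{C}'$ (it is not among the removed edges).'' It is indeed not among the two edges removed by the swap with source~$s$, but under the distributed daemon it can be removed by a \emph{different} concurrent swap: if $s$ is itself a target, say $s=\max(s'')$ for another source~$s''\in A_s$, then the swap with source~$s''$ deletes the edge $(\max(s''),\max(\max(s'')))=(s,\max(s))$ and reattaches $\max(s)$ to~$s''$. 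In that situation $p'(v)=\max(s)$ is \emph{not} a child of~$s$ in~$\mathcal{C}'$, so the chain ``$s\in\subtree{u}'\Rightarrow\max(s)\in\subtree{u}'$'' breaks down. This is precisely the phenomenon the paper formalizes as a \emph{swap chain}: a path $u_0,\dots,u_\sigma$ with $u_1,\dots,u_{\sigma-1}\in A_s\cap A_t$, and the ``Easy properties'' lemma deliberately does \emph{not} assert $A_s\cap A_t=\emptyset$.

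Your closing ``minor technicality'' correctly establishes that $p'$ is well-defined — swap-outs and swap-ins across concurrent swaps are pairwise distinct — but this is a separate issue from whether $(s,\max(s))$ persists, so it does not repair the gap. The argument can be patched within your framework: when $s\in A_t$ we have $s\neq u$ (since $u\not\in A_t$), hence $d\geq 2$ and $s''=p(s)$ lies at depth $d-2$ in $\subtree{u}$; the inductive hypothesis gives $s''\in\subtree{u}'$, the swap with source~$s''$ makes $\max(s)$ a child of~$s''$ in~$\mathcal{C}'$, and then $v=\min(s)\in\children{\max(s)}'\subseteq\subtree{u}'$. For comparison, the paper avoids this case split entirely by a single strong induction on $|\subtree{u}'|$: when $u\in A_s$ it takes the whole swap chain starting at~$u$, shows (via the swap-chain lemma) that the chain's external children set is invariant and that all chain nodes stay in $\subtree{u}'$, and then applies the inductive hypothesis uniformly to those external children. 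That approach absorbs the concurrent-swap interactions in one structural lemma rather than tracking them edge by edge, which is arguably cleaner, but your two-inclusion decomposition would be equally valid once the swap-out subcase is corrected as above.
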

\begin{proof}
 By induction on $|{\subtree{u}}'|$. If $|{\subtree{u}}'|=1$, then $u$ is a leaf in~$\mathcal{C}'$, thus also
in~$\mathcal{C}$, and ${\subtree{u}}'=\subtree{u}=\{u\}$. Assume that the statement holds for all nodes whose subtree
contains at most~$k$ nodes in~$\mathcal{C}'$, where $k\geq 1$. Let $u\not\in A_t$ be a node for which
$|{\subtree{u}}'|=k+1$.

If $u\not\in A_s$, then $\children{u}'=\children{u}$. Moreover, for all~$z\in\children{u}'$,
$|{\subtree{z}}'|<|{\subtree{u}}'|$ and $z\not\in A_t$, since their father was not the source of a swap. Therefore, the
inductive hypothesis applies to each node~$z\in\children{u}'$ and we get:
$${\subtree{u}}'=\{u\}\cup\bigcup_{z\in\children{u}'} {\subtree{z}}' = \{u\}\cup\bigcup_{z\in\children{u}}
\subtree{z} = \subtree{u} \enspace.$$

If $u\in A_s$, then $u$ must be the origin of a swap chain. Let $V_\mathrm{ch}$ be the node set of that swap chain and
let $\children{\mathrm{ch}}=\left(\bigcup_{v\in V_\mathrm{ch}} \children{v}\right)\setminus V_\mathrm{ch}$ and
$\children{\mathrm{ch}}'=\left(\bigcup_{v\in V_\mathrm{ch}} \children{v}'\right)\setminus V_\mathrm{ch}$. By
Lemma~\ref{lem:swapchainprop}, $\children{\mathrm{ch}}'=\children{\mathrm{ch}}$. Moreover, each node
$z\in\children{\mathrm{ch}}'$ is in the subtree of~$u$ in~$\mathcal{C}'$ and 
$z\not\in A_t$. Therefore, the inductive hypothesis applies to each node $z\in\children{\mathrm{ch}}'$ and we get:
$${\subtree{u}}'= V_\mathrm{ch}\cup\bigcup_{z\in \children{\mathrm{ch}}'} {\subtree{z}}' =
V_\mathrm{ch}\cup\bigcup_{z\in \children{\mathrm{ch}}} \subtree{z} = \subtree{u} \enspace.$$ 
\end{proof}

\begin{lemma} \label{lem:samebadset}
 If $\tilde{B}'=\tilde{B}$, then $V(\mathcal{T}'_b)=V(\mathcal{T}_b)$, for all~$b\in \tilde{B}$.
\end{lemma}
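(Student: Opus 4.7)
The plan is to combine the hypothesis $\tilde{B}' = \tilde{B}$ with Lemma~\ref{lem:subtreepreserve} to reduce the claim to a purely set-theoretic equality that falls out of subtree preservation.

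First, I would establish that no bad node is activated for a height update and no bad node is the source or target of a swap in step~$P$. The swap guard~G2 requires both the source and the target of any swap to be stable, and a stable internal node $u$ satisfies $h_u = 1 + \max_w h_w > \max_w h_w$ (with $h_u = 0$ in the leaf case), so no bad node belongs to $A_s \cup A_t$. Moreover, if some $b \in \tilde{B}$ had performed a height update in step~$P$, then its new height $1 + \max_{w \in \children{b}} h_w$ would be strictly greater than the maximum height of its children, making $b$ non-bad in~$\mathcal{C}'$ and contradicting $\tilde{B}' = \tilde{B}$. Hence $A_b = \emptyset$, and every $b \in \tilde{B}$ satisfies $b \notin A_t$; Lemma~\ref{lem:subtreepreserve} then yields ${\subtree{b}}' = \subtree{b}$ for every $b \in \tilde{B}$, including for the real root, which is never the target of a swap.

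Next, I would rewrite each component as a set difference. Unfolding Definition~\ref{def:components}, a node $v$ lies in $V(\mathcal{T}_b)$ iff $v \in \subtree{b}$ and the directed path from~$b$ to~$v$ in $\tilde{\mathcal{C}}$ contains no bad node other than possibly~$b$ itself; equivalently,
\[
V(\mathcal{T}_b) = \subtree{b} \setminus \bigcup_{b' \in (\tilde{B} \cap \subtree{b}) \setminus \{b\}} \subtree{b'},
\]
and the analogous identity holds for $V(\mathcal{T}'_b)$ in~$\tilde{\mathcal{C}}'$. Using ${\subtree{b}}' = \subtree{b}$ and $\tilde{B}' = \tilde{B}$, the two indexing sets coincide; and for each bad $b' \neq b$ with $b' \in \subtree{b}$, $b'$ is again not a target of any swap, so Lemma~\ref{lem:subtreepreserve} gives ${\subtree{b'}}' = \subtree{b'}$. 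Substituting yields $V(\mathcal{T}'_b) = V(\mathcal{T}_b)$.

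The main obstacle I foresee is justifying the set-difference identity for $V(\mathcal{T}_b)$ from the weakly-connected-subgraph definition. This requires observing that within $\subtree{b}$, removing the full subtree of each bad descendant $b' \neq b$ of~$b$ leaves exactly the nodes reachable from~$b$ by a directed path avoiding other bad nodes, which is precisely the vertex set of the maximal weakly connected subgraph rooted at~$b$ with no other bad nodes. Although straightforward, this needs to be written carefully to handle nested bad descendants; one can argue that the union can equivalently be taken over only the \emph{nearest} bad descendants of~$b$, yielding a disjoint decomposition that makes the set equality transparent.
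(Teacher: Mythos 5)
Your approach matches the paper's: write each component as a set difference of subtrees, then reduce the claim to subtree preservation via Lemma~\ref{lem:subtreepreserve}, using $\tilde{B}'=\tilde{B}$ to make the indexing sets coincide.

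There is, however, a flaw in your detour aimed at establishing $A_b=\emptyset$. You assert that if a bad node $b$ performs a height update, then "its new height $1+\max_{w\in\children{b}}h_w$ would be strictly greater than the maximum height of its children, making $b$ non-bad in~$\mathcal{C}'$.'' This implication is false: the children of~$b$ may themselves be bad nodes executing height updates in the same step, so their height values can increase, and $b$ may remain bad. For instance, let $b$ have children $c$ (bad, $h_c=1$) and a leaf with height~$0$, with $h_b=0$; let $c$ have a leaf child of height~$3$ and a leaf child of height~$0$. If $b$ and $c$ both update, then $h_b'=2<h_c'=4$, so $b$ is still bad in~$\mathcal{C}'$. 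The correct statement is the weaker one in Lemma~\ref{lem:badleaves}: some \emph{leaf} of each weakly connected component of~$\mathcal{G}(A_b)$ becomes non-bad, which already contradicts $\tilde{B}'=\tilde{B}$ whenever $A_b\neq\emptyset$.

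This error happens to be harmless because $A_b=\emptyset$ is never actually used: your final substitution only needs $b\notin A_t$ for each $b\in\tilde{B}$, which you have already obtained from the easy property $A_t\cap B=\emptyset$ (swap targets must be stable, and bad nodes are not). The paper's proof invokes exactly that and nothing more. Removing the false intermediate claim leaves a proof that coincides with the paper's.
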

\begin{proof}
Note that, for each $b\in \tilde{B}'$, we can write $V(\mathcal{T}'_b)$ as follows:
$$V(\mathcal{T}'_b) = {\subtree{b}}' \setminus \bigcup_{u\in
\tilde{B}'\cap {\subtree{b}}' \setminus \{b\}} {\subtree{u}}' \enspace.$$
Since $b\in\tilde{B}'=\tilde{B}$, we must have $b\not\in A_t$. Therefore, by Lemma~\ref{lem:subtreepreserve}, each
$b\in\tilde{B}'$ satisfies ${\subtree{b}}'=\subtree{b}$. We get, then, that
$$V(\mathcal{T}'_b) = {\subtree{b}} \setminus
\bigcup_{u\in \tilde{B}\cap {\subtree{b}} \setminus \{b\}}
{\subtree{u}} = V(\mathcal{T}_b) \enspace.$$
\end{proof}

 By Lemma~\ref{lem:samebadset}, in a sequence of steps in which no bad node is activated or becomes non-bad, each
of the connected components behaves in the same way as a tree that does not contain bad nodes.\footnote{That is
slightly inaccurate: Certain nodes of the component may contain in their children set some bad nodes, which are at the
root of other components. From the point of view of the father's component, these bad nodes will behave as if they are
leaves whose height value is fixed to some arbitrary value, smaller than the height value of their father.
However, it should be clear that this does not change the fact that the component will stabilize after a finite number
of steps.} Therefore, by
Lemmas~\ref{lem:finiteheightupdates} and~\ref{lem:finiteswaps}, each component will stabilize in finite time and the bad
nodes will be the only candidates for activation. Lemma~\ref{lem:componentsstabilize} follows.

\subsection{Proof of Lemma~\ref{lem:badnesslexdecr}} \label{apdx:lem:badnesslexdecr}

In this section, we use some notation introduced in Appendices~\ref{apdx:cor:nobadnodes}
and~\ref{apdx:lem:componentsstabilize}. Additionally, in this section we will use the following notation.
Let~$\vec{b}$ and~$\vec{b}'$ denote the badness vectors corresponding to~$\mathcal{C}$
and~$\mathcal{C}'$.

\begin{lemma} \label{lem:badnodeneutralized}
If at least one bad node becomes non-bad without being activated in step~$P$, then $|B'|<|B|$.
\end{lemma}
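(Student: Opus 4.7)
The plan is to build directly on the accounting already developed in the proof of Lemma~\ref{lem:badnodenotincr}. Recall the identity $|B'|-|B|=|B'\setminus B|-|B\setminus B'|$, together with Lemma~\ref{lem:newbadnodes}'s bound $|B'\setminus B|\leq|A_b\setminus B'|$ and the containment $A_b\subseteq B$. These already yield $|B'|\leq|B|$; the task is to find one extra unit of slack whenever a bad node is neutralized without being activated.

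First I would fix a witness $u$ to the hypothesis: a node with $u\in B\setminus B'$ and $u\notin A$. Since $u$ is not activated it is certainly not in $A_b$, so $u$ lies in $(B\setminus B')\setminus(A_b\setminus B')$. Using $A_b\subseteq B$ we also have the inclusion $A_b\setminus B'\subseteq B\setminus B'$, so this last set difference is an honest strict containment and we obtain the sharper inequality
\[
|B\setminus B'|\geq|A_b\setminus B'|+1.
\]

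Plugging this together with Lemma~\ref{lem:newbadnodes} into the identity for $|B'|-|B|$ gives
\[
|B'|-|B|=|B'\setminus B|-|B\setminus B'|\leq|A_b\setminus B'|-\bigl(|A_b\setminus B'|+1\bigr)=-1,
\]
so $|B'|<|B|$, which is the desired conclusion. The only point that needs any checking is that the witness $u$ really does contribute to $B\setminus B'$ without contributing to $A_b\setminus B'$, and this is immediate from the hypothesis $u\notin A$ together with the definition $A_b\subseteq A$; accordingly I do not anticipate any genuine obstacle in this argument, which is essentially a one-line strengthening of the bookkeeping behind Lemma~\ref{lem:badnodenotincr}.
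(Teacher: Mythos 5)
Your proof is correct and is essentially the paper's own argument repackaged: the paper partitions $B$ as $N\cup(A_b\setminus B')\cup(B\cap B')$ and $B'$ as $(B'\setminus B)\cup(B\cap B')$, which is precisely the observation that $|B\setminus B'|=|N|+|A_b\setminus B'|\geq|A_b\setminus B'|+1$ when $N\neq\emptyset$, and then applies Lemma~\ref{lem:newbadnodes}. Your phrasing via the identity $|B'|-|B|=|B'\setminus B|-|B\setminus B'|$ and a single witness $u\in(B\setminus B')\setminus A$ is the same bookkeeping, so no substantive difference.
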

\begin{proof}
 Let $N$ be the set of bad nodes that become non-bad without being activated. We can partition the set~$B$ as follows:
$$B=N\cup \left(A_b\setminus B'\right) \cup \left(B\cap B'\right)
\enspace.$$
Moreover, we can naturally partition the set~$B'$ as follows:
$$B'=\left(B'\setminus B\right) \cup \left(B\cap B'\right) \enspace.$$
If $|N|>0$, then from the first equation we get $|B|>|A_b\setminus B'|+|B\cap B'|$, and then from the second equation
and
Lemma~\ref{lem:newbadnodes} we have that $|B'|=|B'\setminus B|+|B\cap B'|\leq |A_b\setminus B'|+|B\cap B'|<|B|$. 
\end{proof}

\begin{lemma} \label{lem:componentfatherbad}
If there exist nodes~$u$ and~$v$ such that $v\in\children{u}'\cap A_b$ and $u\in B\setminus A$ or $u\not\in B'$, then
$|B'|<|B|$.
\end{lemma}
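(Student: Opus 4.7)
The plan is to strengthen the counting argument of Lemma~\ref{lem:newbadnodes} into a strict inequality. That lemma constructs an injection $\phi\colon B' \setminus B \to A_b \setminus B'$ which sends each newly bad node $x$ to some leaf of the component $\mathcal{E}$ of $\mathcal{G}(A_b)$ whose root has $x$ as its $\mathcal{C}'$-father. In particular, the image of $\phi$ lies inside the union of the leaf sets of the components, which is itself contained in $A_b \setminus B'$ by Lemma~\ref{lem:badleaves}. Using the identity $|B'|-|B|=|B'\setminus B|-|B\setminus B'|$ from the proof of Lemma~\ref{lem:badnodenotincr} together with $|A_b \setminus B'| \leq |B \setminus B'|$ (since $A_b \subseteq B$), it is enough to exhibit one element of $A_b \setminus B'$ that lies outside the image of $\phi$.

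First I would handle the case $u \in B \setminus A$. Since $u$ is not activated, $\children{u}' = \children{u}$ and $u \notin A_b$, so $v$ is a child of $u$ in $\mathcal{C}$ whose $\mathcal{C}$-parent does not belong to $A_b$; hence $v$ is the root of some component $\mathcal{E}$. In $\mathcal{C}'$ the father of $v$ is still $u$, and because $u \in B$ it cannot lie in $B'\setminus B$. Therefore no element of $B'\setminus B$ maps to a leaf of $\mathcal{E}$ under $\phi$, and any leaf of $\mathcal{E}$ serves as the required witness in $A_b\setminus B'$ outside the image of $\phi$.

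Next I would handle the case $u \notin B'$, splitting on whether $v$ is the root of its component $\mathcal{E}$. If it is, the argument mirrors the previous case: the only candidate pre-image under $\phi$ of a leaf of $\mathcal{E}$ would be $u$, but $u \notin B'$ rules $u$ out of $B'\setminus B$. If $v$ is not the root, then $v$'s $\mathcal{C}$-parent lies in $A_b$, and by Lemma~\ref{lem:badleaves} the internal structure of $\mathcal{E}$ is preserved in~$\mathcal{C}'$, so this parent remains the father of $v$ in~$\mathcal{C}'$ and must therefore coincide with $u$. This gives $u \in A_b$, hence $u \in A_b \setminus B'$ by the hypothesis; moreover $v \in \children{u}\cap A_b$ shows that $u$ has a child in its component, so $u$ is not a leaf of any component and hence lies outside the image of $\phi$.

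The main obstacle is the non-root subcase just above, because the witness of strict inequality is no longer a component leaf but the non-leaf node $u$ itself; spotting this requires applying Lemma~\ref{lem:badleaves} to transport the hypothesis ``$v\in\children{u}'$'' back into a statement about $v$'s $\mathcal{C}$-parent, which then forces $u$ into $A_b \setminus B'$. Once the witness is produced in each case we obtain $|B'\setminus B| < |A_b \setminus B'| \leq |B\setminus B'|$, and therefore $|B'|<|B|$.
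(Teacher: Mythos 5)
Your proof is correct, and it follows a genuinely different route from the paper's. The paper builds a single fresh injection $f\colon B' \to B\setminus\{v\}$, sending each $x\in B'\cap(B\setminus A_b)$ to itself and every other $x\in B'$ to an $A_b$-child of $x$ in $\mathcal{C}'$; it then checks that $v$ is never in the image (the only candidates are $v$ itself, ruled out because $v\in A_b$, and the $\mathcal{C}'$-father $u$, which under either hypothesis is outside the domain of $f$ or is a fixed point). You instead reuse the injection $\phi\colon B'\setminus B \to A_b\setminus B'$ underlying Lemma~\ref{lem:newbadnodes} and exhibit a witness in $A_b\setminus B'$ that $\phi$ must miss, splitting according to whether $v$ is the root of its $\mathcal{G}(A_b)$-component; the non-root subcase correctly transports the hypothesis ``$v\in\children{u}'$'' back to $\mathcal{C}$ via Lemma~\ref{lem:badleaves} to identify $u$ itself (a non-leaf of its component, hence outside the image of $\phi$) as the witness. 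Both arguments yield $|B'|<|B|$; the paper's injection on all of $B'$ is a bit more compact and avoids the case split on $v$'s position in its component, while yours has the appeal of strengthening the already-established counting bound of Lemma~\ref{lem:newbadnodes} rather than starting a new injection from scratch.
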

\begin{proof}
 We will prove the statement by demonstrating an injection from~$B'$
to~$B\setminus\{v\}$. Consider the function $f:B'\rightarrow B$ where, given~$x\in B'$, $f(x)$ is defined as follows:
\begin{itemize}
 \item If $x\in B\setminus A_b$, then $f(x)=x$.
 \item Otherwise, $f(x)=y$ where $y$ is any child of~$x$ in~$\mathcal{C}'$ such that $y\in A_b$.
\end{itemize}
We need to prove that the function~$f$ is well-defined and injective. For injectivity, it suffices to show that if
$x\not\in B\setminus A_b$, then $f(x)\not\in B\setminus A_b$. Indeed, given an~$x\in B'$ such that $x\not\in
B\setminus A_b$, we distinguish two cases: If $x\not\in B$, then, by Lemmas~\ref{lem:nonact},~\ref{lem:nonbad},
and~\ref{lem:swap}, $x$ has a child~$y\in A_b$ in~$\mathcal{C}'$. On the other
hand, if $x\in A_b$, then by Lemma~\ref{lem:badleaves} we know that $x$ is not a leaf of the component of
$\mathcal{G}(A_b)$ in which it belongs and thus it has a child~$y\in A_b$ in~$\mathcal{C}'$. Clearly, in both cases,
$y\not\in B\setminus A_b$. 

It remains to show that no $x\in B'$ is mapped to~$v$. The only candidate nodes that could be mapped to~$v$ are $v$
itself and~$u$, the father of~$v$ in~$\mathcal{C}'$. We know that $v\in A_b$, which implies that $v\not\in B\setminus
A_b$, and thus $f(v)\neq v$. As for~$u$, we have two cases: If $u\not\in B'$, then $u$ is not even in the domain of~$f$.
If $u\in B'$, then, by assumption, we must have $u\in B\setminus A$ and thus $f(u)=u\neq v$. 
\end{proof}

For the proof of Lemma~\ref{lem:badnesslexdecr}, 
we can assume that $\vec{b}'$ and $\vec{b}$ are of equal length, otherwise the statement holds by
Lemma~\ref{lem:badnodenotincr}. Moreover, we can assume that $A_b\neq\emptyset$, otherwise the statement holds by
Lemma~\ref{lem:badnodeneutralized}. Since $\mathfrak{r}\not\in A_b$, there exists a bad node in~$\tilde{\mathcal{C}}$
whose corresponding component contains the parent of a node in~$A_b$. Let $b_j$, $j\geq 1$, be the first such bad node
in the ordering of Definition~\ref{def:badness}.

 By definition of~$b_j$, we have that $b_1,\dots,b_j\not\in A_b$. Therefore, by Lemma~\ref{lem:badnodeneutralized},
$b_1,\dots,b_j\in \tilde{B}'$. By Lemmas~\ref{lem:nonact},~\ref{lem:nonbad}, and~\ref{lem:swap}, a non-bad node can be
turned into a bad node only if, in $\tilde{\mathcal{C}}'$, it is the father of a node in~$A_b$. This implies that
$b_1,\dots,b_j$ are the first $j$ bad nodes in~$\tilde{\mathcal{C}}'$ according to the ordering of
Definition~\ref{def:badness}.

 By definition of~$b_j$ and Lemma~\ref{lem:badnodeneutralized}, we also know that any child of any node in any of the
components $\mathcal{T}_{b_1},\dots,\mathcal{T}_{b_{j-1}}$, that was in~$\tilde{B}$, is also in~$\tilde{B}'$. Therefore,
for each $i<j$, $|\mathcal{T}'_{b_i}|=|\mathcal{T}_{b_i}|$. Finally, by Lemma~\ref{lem:componentfatherbad}, we can
assume that $b_j$ itself is not the father of any node $v\in A_b$ and that, for
any such node~$v$ whose father~$w$ was in~$\mathcal{T}_{b_j}$, at least~$w$ no longer belongs to
$\mathcal{T}'_{b_j}$. Therefore, $|\mathcal{T}'_{b_j}|<|\mathcal{T}_{b_j}|$.

\subsection{Proof of Theorem~\ref{thm:phaseoneconvergence}} \label{apdx:thm:phaseoneconvergence}

 By Lemma~\ref{lem:badnesslexdecr}, the badness vector decreases lexicographically whenever at least one bad node is
activated or becomes non-bad. By Lemma~\ref{lem:componentsstabilize}, we cannot have an infinite sequence of steps in
which no bad node is activated or becomes non-bad. Moreover, during such a sequence of steps, the set of bad nodes
remains the same by Lemmas~\ref{lem:nonact}, \ref{lem:nonbad}, and~\ref{lem:swap}, and thus the badness vector remains
the same by Lemma~\ref{lem:samebadset}. The theorem follows from these observations and the easy fact that no
configuration can have a corresponding badness vector that is lexicographically smaller than the single-component
badness vector $(n+1)$, where $n$ is the number of nodes in the system.

\end{document}